\newcommand{\myunit}{1 cm}
\tikzset{
    node style sp/.style={draw,circle,minimum size=\myunit},
    node style ge/.style={circle,minimum size=\myunit},
    arrow style mul/.style={draw,sloped,midway,fill=white},
    arrow style plus/.style={midway,sloped,fill=white},
}
\DeclareMathOperator*{\argmin}{arg\,min}
\newtheorem{theorem}{{\bf Theorem}}
\newtheorem{corollary}{{\bf Corollary}}
\newtheorem{example}{{\bf Example}}
\newtheorem{definition}{{\bf Definition}}
\newtheorem{remark}{{\bf Remark}}
\newcommand{\net}{\mathcal{N}}
\newcommand{\netnodes}{V}
\newcommand{\netedges}{E}
\newcommand{\compgraph}{\mathcal{G}}
\newcommand{\compnodes}{\Omega}
\newcommand{\compedges}{\Gamma}
\newcommand{\prenodes}[1]{\Phi_{\uparrow}(#1)}
\newcommand{\sucnodes}[1]{\Phi_{\downarrow}(#1)}
\newcommand{\distance}{\mathtt{d}} 
\newcommand{\edgewt}{\mathtt{W}} 
\newcommand{\processingwt}{\mathtt{P}} 
\newcommand{\mincost}{\textsf{MinCost}}
\newcommand{\mindelay}{\textsf{MinDelay}}
\newcommand{\embedding}{\mathcal{E}}
\newcommand{\setofembeddings}{\mathbb{E}}
\begin{document}

\title{Optimal Embedding of Functions for In-Network Computation:
  Complexity Analysis and Algorithms}

\author{Pooja Vyavahare,~\IEEEmembership{} Nutan
  Limaye,~\IEEEmembership{} and~D.~Manjunath,~\IEEEmembership{Senior
    Member,~IEEE}
  \thanks{Pooja Vyavahare and D.~Manjunath are with the Deptt. of
    Elecl. Engg., and Nutan Limaye is with the Deptt.~ of Comp.
    Sc.~\& Engg. of IIT Bombay.}  \thanks{Email addresses are
    \{vpooja,dmanju\}@ee.iitb.ac.in and {nutan}@cse.iitb.ac.in }
  \thanks{This work was supported in part by DST projects SR/SS/EECE/0139/2011, ITRA/15(64)/Mobile/USEAADWN/01, and was
    carried out in the Bharti Centre for Communication at IIT Bombay.}
}

\maketitle

\begin{abstract}
  We consider optimal distributed computation of a given function of
  distributed data. The input (data) nodes and the sink node that
  receives the function form a connected network that is described by
  an undirected weighted network graph. The algorithm to compute the
  given function is described by a weighted directed acyclic graph and
  is called the computation graph. An embedding defines the
  computation communication sequence that obtains the function at the
  sink. Two kinds of optimal embeddings are sought, the embedding
  that---(1)~minimizes delay in obtaining function at sink, and
  (2)~minimizes cost of one instance of computation of function. This
  abstraction is motivated by three applications---in-network
  computation over sensor networks, operator placement in distributed
  databases, and module placement in distributed computing.

  We first show that obtaining minimum-delay and minimum-cost
  embeddings are both NP-complete problems and that cost minimization
  is actually MAX SNP-hard. Next, we consider specific forms of the
  computation graph for which polynomial time solutions are
  possible. When the computation graph is a tree, a polynomial time
  algorithm to obtain the minimum delay embedding is described. Next,
  for the case when the function is described by a layered graph we
  describe an algorithm that obtains the minimum cost embedding in
  polynomial time. This algorithm can also be used to obtain an
  approximation for delay minimization. We then consider bounded
  treewidth computation graphs and give an algorithm to obtain the
  minimum cost embedding in polynomial time.
\end{abstract}

\begin{keywords}
  In-network function computation; operator placement; graph
  embedding;
\end{keywords}
\IEEEpeerreviewmaketitle

\section{Introduction}
\label{sec:intro} 

\subsection{Background}

Efficient computing of functions of distributed data is of interest in
many applications, most recently in sensor networks \cite{Giridhar05}
and in programming models for distributed processing of large data
sets, e.g., MapReduce \cite{Dean04} and Dryad \cite{Isard07}.

Consider a typical sensor network scenario where sensor nodes measure
and store environment variables and form a distributed database. These
nodes also have computing and communication capabilities and can form
a connected network. A sink node, also called terminal or receiver, is
interested in one or more functions of this distributed data rather
than in the raw data itself. Conventional examples for such functions
are maximum, minimum, mean, parity, and histogram
\cite{Giridhar05}. More sophisticated functions are also easily
motivated, e.g., spatial and temporal correlations, spectral
characteristics of the data (that may be obtained by performing FFT on
the data), and filtering operations on the raw data in sensor
networks. A naive approach to obtain the required function(s) of the
distributed data at the sink would be to collect the raw data at the
sink and have it perform the computation. Alternatively, since the
nodes have computation capability, it could possibly be more efficient
to push the computation into the network, i.e., use a distributed
computation scheme over the communication network.  Our interest is in
the latter approach---efficient `in-network computation' of the
required function.

Several measures of efficiency of computation may be defined. Total
energy expended in obtaining one sample of the function is a possible
measure. Delay from the time at which the data is available at the
sources to the time at which the function value is available at the
sink is a second possible measure. If the data at each of the sources
were to be a stream, then the rate at which the function values are
available at the sink is a third possible measure. In this paper we
consider only the first two measures above---cost of computation and
delay in computation. Thus our focus is on algorithms that find an
computation and communication (routing) sequence to compute a target
function on a given network that minimize the delay or the cost.  The
target functions are assumed to belong to the class of functions that
are computed using a scheme that can be represented as directed
acyclic graphs (DAGs). The following example illustrates our intent.

Consider a network of $N$ nodes, $K$ of which collect measurement data
from their environment. Let $x_k(t)$ be data sample available at node
$k$ ($k=1, \ldots, K$) at time $t$ and let $r(t) = \frac{1}{K-1}
\sum_{i=1}^{K-1} x_{i}(t) x_{i+1}(t)$ be the function of
interest. $r(t)$ can be computed using the schema of
Fig.~\ref{fig:r-of-t}a. Each edge in this directed graph represents an
intermediate value in the computation of $r(t)$ and each node
corresponds to an operation that is to be performed on its inputs. The
communication network over which $r(t)$ is to be computed is shown in
Fig.~\ref{fig:r-of-t}b as a weighted undirected graph. In the example
all the edges have unit weight. Two possible computation and
communication schemes are shown in
Figs.~\ref{fig:r-of-t}c,d. We see that the scheme in
Fig.~\ref{fig:r-of-t}c has a lower cost than that in
Fig.~\ref{fig:r-of-t}d.

Several flavors of in-network function computation exist in the
literature. A randomly deployed multihop wireless network of noise
free links is considered in
\cite{Giridhar05,Giridhar06,Khude05,Kamath14}. They determine
asymptotic achievable rates at which different symmetric functions
like minimum, maximum and type vector may be computed. An identical
objective is addressed in
\cite{Gallager88,Kushilevitz98,Feige00,Newman04,Goyal05} for single
hop networks with noisy links and in \cite{Ying07,Dutta08} for
multihop wireless networks with noisy links. When the nodes are in a
$\sqrt{n} \times \sqrt{n}$ grid and communicate over wireline or
wireless links, \cite{Karamchandani11} obtains the time and number of
transmissions required to compute a function. Randomized gossip
algorithms \cite{Boyd05,Shah09}, where a random sequence of node-pairs
exchange data and perform a specific computation is used in
\cite{Mosk-Aoyama06,Ayaso08,Bodas11} for function computation. The
interest is in time for all nodes to converge to the specified
function. Another stream of work considers computing of specific
functions using network coding and designs the communication networks
to maximize the computation rate
\cite{Ramamoorthy10,Appuswamy11,Rai12}. None of the above consider
minimum cost function computation. Further, to the best of our
knowledge, only \cite{Shah13} considers computation of arbitrary
functions in finite size networks; the interest there is on maximizing
the rate of computation.

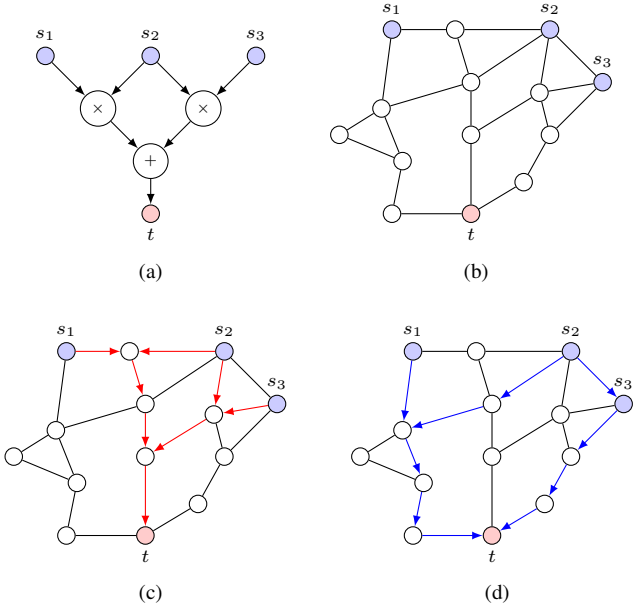
\begin{figure}[tbp]
  \centering
\subfloat[]{
\begin{tikzpicture}[>=latex,scale=0.7]
 \scriptsize
 \tikzstyle{every node} = [circle,draw=black]
 \node (a) at (-2,0) [fill=blue!20] {};
 \node (b) at (0,0) [fill=blue!20] {};
 \node (c) at (2,0) [fill=blue!20] {};
 \node (d) at (-1,-1)[font=\tiny]  {$\times$};
 \node (e) at (1,-1) [font=\tiny]{$\times$};
 \node (f) at (0,-2)[font=\tiny] {$+$};
 \node (g) at (0,-3) [fill=red!20]{};
 \node at (-2,0.4) [draw=none]{$s_1$};
 \node at (0,0.4) [draw=none]{$s_2$};
 \node at (2,0.4) [draw=none]{$s_3$};
 \node at (0,-3.4) [draw=none]{$t$};
 \draw [->] (a) -- (d) ;
 \draw [->] (b) -- (d) ;
 \draw [->] (b) -- (e) ;
 \draw [->] (c) -- (e) ;
 \draw [->] (d) -- (f) ;
 \draw [->] (e) -- (f) ;
 \draw [->] (f) -- (g) ;
 
 \end{tikzpicture}
}
\hspace*{10pt}
\subfloat[]{
\begin{tikzpicture}[>=latex,scale=0.7]
  \scriptsize
  \tikzstyle{every node} = [circle,draw=black]
  \node (a) at (-2,0) [fill=blue!20]  {};
  \node (b) at (-0.8,0) {};
  \node (c) at (1,0)  [fill=blue!20]  {};
  \node (d) at (-2.2,-1.5) {};
  \node (e) at (-0.5,-1) {};
  \node (f) at (0.8,-1.2) {};
  \node (g) at (1,-2)  {};
  \node (h) at (0.5,-2.9) {};
  \node (i) at (-0.5,-2) {};
  \node (j) at (-1.8,-2.5) {};
  \node (k) at (-0.5,-3.5) [fill=red!20] {};
  \node (l) at (-2,-3.5){};
  \node (m) at (2,-1)[fill=blue!20] {};
  \node (n) at (-3,-2){};
  \draw [-] (a) -- (b) -- (c) -- (m) --(f) --(c) -- (e)  --(i) -- (k);
  \draw [-] (a) -- (d) -- (n) -- (j) -- (l) --(k);
  \draw [-] (m) -- (g) -- (h) --(k);
  \draw [-] (e) -- (d) -- (j);
  \draw [-] (i) -- (f) -- (g);
  \draw [-] (b) -- (e);
  \node at (a) [draw=none,above] {$s_1$};
  \node at (c) [draw=none,above] {$s_2$};
  \node at (m) [draw=none,above] {$s_3$};
  \node at (-0.5,-3.9) [draw=none] {$t$};
 \end{tikzpicture}
} 

\subfloat[]{
\begin{tikzpicture}[>=latex,scale=0.7]
  \scriptsize
  \tikzstyle{every node} = [circle,draw=black]
  \node (a) at (-2,0) [fill=blue!20]  {};
  \node (b) at (-0.8,0) {};
  \node (c) at (1,0)  [fill=blue!20]  {};
  \node (d) at (-2.2,-1.5) {};
  \node (e) at (-0.5,-1) {};
  \node (f) at (0.8,-1.2) {};
  \node (g) at (1,-2)  {};
  \node (h) at (0.5,-2.9) {};
  \node (i) at (-0.5,-2) {};
  \node (j) at (-1.8,-2.5) {};
  \node (k) at (-0.5,-3.5) [fill=red!20] {};
  \node (l) at (-2,-3.5){};
  \node (m) at (2,-1) [fill=blue!20] {};
  \node (n) at (-3,-2) {};
  \draw [-] (c) -- (e)  ;
  \draw [-] (a) -- (d) -- (n) -- (j) -- (l) --(k);
  \draw [-] (m) -- (g) -- (h) --(k);
  \draw [-] (e) -- (d) -- (j);
  \draw [-] (f) -- (g);
  \draw [-] (c) -- (m) ;
  \draw [->,color=red] (a) -- (b)  ;
  \draw [->,color=red] (b) -- (e) ;
  \draw [->,color=red] (e) -- (i) ;
  \draw [->,color=red] (i) -- (k) ;
  \draw [->,color=red] (m) -- (f) ;
  \draw [->,color=red] (c) -- (b) ;
  \draw [->,color=red] (c) -- (f) ;
  \draw [->,color=red] (f) -- (i) ;
  \node at (a) [draw=none,above] {$s_1$};
  \node at (c) [draw=none,above] {$s_2$};
  \node at (m) [draw=none,above] {$s_3$};
  \node at (-0.5,-3.9) [draw=none] {$t$};
\end{tikzpicture}
} 
\hspace*{10pt}
\subfloat[]{
\begin{tikzpicture}[>=latex,scale=0.7]
  \scriptsize
  \tikzstyle{every node} = [circle,draw=black]
  \node (a) at (-2,0) [fill=blue!20]  {};
  \node (b) at (-0.8,0) {};
  \node (c) at (1,0)  [fill=blue!20]  {};
  \node (d) at (-2.2,-1.5) {};
  \node (e) at (-0.5,-1) {};
  \node (f) at (0.8,-1.2) {};
  \node (g) at (1,-2)  {};
  \node (h) at (0.5,-2.9) {};
  \node (i) at (-0.5,-2) {};
  \node (j) at (-1.8,-2.5) {};
  \node (k) at (-0.5,-3.5) [fill=red!20] {};
  \node (l) at (-2,-3.5){};
  \node (m) at (2,-1) [fill=blue!20] {};
  \node (n) at (-3,-2) {};
  \draw [-] (a) -- (b) -- (c) --(f) -- (m);
  \draw [-]  (d) -- (n) ;
  \draw [-] (i) -- (f) -- (g);
  \draw [-] (b) -- (e);
  \draw [-] (e)  --(i) -- (k);
  \draw [-] (n) -- (j);
  \draw [->,color=blue] (a) -- (d)  ;
  \draw [->,color=blue] (d) -- (j);
  \draw [->,color=blue] (j) -- (l) ;
  \draw [->,color=blue] (l) -- (k) ;
  \draw [->,color=blue] (c) -- (e)  ;
  \draw [->,color=blue] (e) -- (d) ;
  \draw [->,color=blue] (c) -- (m) ;
  \draw [->,color=blue] (m) -- (g) ;
  \draw [->,color=blue] (g) -- (h) ;
  \draw [->,color=blue] (h) -- (k) ;
  \node at (a) [draw=none,above] {$s_1$};
  \node at (c) [draw=none,above] {$s_2$};
  \node at (m) [draw=none,above] {$s_3$};
  \node at (-0.5,-3.9) [draw=none] {$t$};
\end{tikzpicture}
} 
  \caption{Computation of function $r(t) = \frac{x_1x_2+x_2x_3}{2}$.
    (a) A schema to compute $r(t)$ (b) Communication Network (c) Implementation 1 (d) Implementation 2}
  \label{fig:r-of-t}
\end{figure}

Rather than arithmetic operations to perform a function computation,
the nodes in Fig.~\ref{fig:r-of-t}a could correspond to operations
required to execute a database query and the directed edges would
represent the flow of the results of the operations. In this case the
graph of Fig.~\ref{fig:r-of-t}a is called a query graph. The input
data to this query graph could be from a distributed database, which
in turn could be a sensor network; in this case the graph in
Fig.~\ref{fig:r-of-t}b would represent the interconnection among the
units of the distributed database and some other nodes that may be
available for computation. In this context performing an efficient
query requires that the `operator placement' be efficient. Efficient
operator placement is addressed in
\cite{Bonfils04,Srivastsava05,Abrams05,Pietzuch06,Ying08,Wen13} all of
which assume that the query graph is a tree. While
\cite{Wen13,Pietzuch06,Bonfils04} develop heuristics based algorithms
for efficient operator placement when the query graph is a tree,
algorithms with formal analyses are available in
\cite{Srivastsava05,Abrams05,Ying08}. Although \cite{Phatak10}
considers a non-tree query graph, only heuristic algorithms are
provided.

Going back in the literature, we see that the `module placement'
problem for distributed processing from the 1980s has a flavor similar
to in-network computation and operator placement problems; see e.g.,
\cite{Stone77,Bokhari81,MaryLo88,Bokhari88,Fernandez-Baca89}. In this
case the nodes of Fig.~\ref{fig:r-of-t}a correspond to the modules of
a program and a directed edge $(u,v)$ implies that module $u$ calls
module $v$ during execution.  Further, the graph of
Fig.~\ref{fig:r-of-t}a is called a \textit{call graph.} The nodes of
Fig.~\ref{fig:r-of-t}b represent the processors on which the modules
of the program need to be executed and the edges represent the inter
processor links. The cost structure for this problem is more
complex. The execution cost is specified for each module-processor
pair and there is also an inter processor communication cost over an
edge that in turn depends on the modules placed at the ends of the
edge. The objective is to place the modules on the processors such
that the total cost of execution is minimized.  The first centralized
algorithm for optimal module placement when the call graph is a
directed tree was given in \cite{Bokhari81} and that for $k$-trees was
given in \cite{Fernandez-Baca89}. \cite{Stone77} showed that the
problem can be efficiently solved for a two processor system by using
network flow algorithms and \cite{MaryLo88} uses a similar approach to
develop heuristic algorithms for a general call graph.

From the preceding discussion we see that the objectives of, and hence
the solution techniques for, efficient in-network computation,
operator placement and module placement all have a very similar
theme---embedding (a formal definition is provided in
Section~\ref{sec:prelims}) a graph representing a computation schema
on a connected weighted graph representing a network of
processors. Much of the literature on such problems assumes that the
graph describing the function is a tree. This is clearly very
restrictive because the computation schema for a very large class of
useful computable functions cannot be described by a tree and are more
generally described by a DAG, e.g., fast Fourier transform (FFT),
sorting, any polynomial function of input data, and any function of
Boolean data.  MapReduce, the popular cloud computing paradigm, also
has a DAG representation and is discussed in some detail in
Example~\ref{ex:mapreduce} in Section~\ref{sec:layered-graph}.

\subsection{Organization of the Paper}

Our interest is in computing arbitrary functions that have a specific
algorithmic representation and the communication network is an
arbitrary network and not a random wireless network. (We reiterate
that although we assume the in-network computation scenario to anchor
the discussion, our results are also directly applicable to the
operator and module placement scenarios.)  Further, we do not seek to
specifically maximize the computation throughput; rather, our interest
is in minimizing the cost or delay in a `one shot' computation of the
function. As we mentioned before, two measures of efficiency are used
in this paper---cost of computation and delay in computation. Our
results on minimum cost embedding can be used to maximize computation
throughput when used in the algorithms developed in \cite{Shah13}.

Given an arbitrary function via its DAG description and an arbitrary
network over which this function is to be computed, our first interest
is to analyze the complexity of finding the optimal computation and
communication scheme that will compute the function in the
network. While much of the literature claims that this problem is
NP-complete (for the case of minimizing cost), to the best of our
knowledge a formal proof is not available; the literature eventually
leads to a citation to a private communication in \cite{Bokhari81}. In
Section~\ref{sec:hardness} we prove that in general, both the minimum
cost and the minimum delay embedding problems are NP-complete.

Some structure in the DAG can be exploited to provide polynomial time
algorithms for our problems. If the DAG is a tree, which is the
assumption in much of the extant literature, the minimum cost
embedding is similar to shortest path algorithms
\cite{Ying08,Shah13}. To the best of our knowledge, minimum delay
embeddings are not considered in the literature and in
Section~\ref{sec:tree} we provide a polynomial time algorithm to find
the minimum delay embedding when the computation schema is a tree.

Next we consider two large classes of computation graphs---(1)~layered
graphs, (2)~bounded treewidth graphs. We derive the motivation for
layered graphs from distributed data processing frameworks like
MapReduce \cite{Dean04} and Dryad \cite{Isard07}.  In
Section~\ref{sec:layered-graph} we provide a polynomial time algorithm
to find the minimum cost embedding when the DAG is a layered
graph. This same algorithm also obtains an approximation for the
minimum delay embedding.

In Section~\ref{sec:treewidth}, we show that the algorithm from
Section~\ref{sec:layered-graph} can find the minimum cost embedding
when the DAG is a bounded treewidth graph.  
The notation and the
formal problem definition is described in the next section.

In Section~\ref{sec:change_computation}, we describe an update
mechanism when there is perturbation in the DAG and we conclude with a
discussion in Section~\ref{sec:discuss}.

\section{Preliminaries}
\label{sec:prelims}

The communication network is represented by an undirected connected
graph $\net = (\netnodes,\netedges)$ with $\netnodes$ being the set of
$n$ nodes and $\netedges$ being the set of $m$ edges. The elements of
$\netnodes$ are denoted by $\{u_1, \ldots, u_n\}.$ Each edge
$(u_i,u_j) \in \netedges$ has a non negative weight $T(u_i,u_j)$
associated with it. The weight could, for example, correspond to
transmission time of a bit on the link, or the energy required to
transmit one bit on the link or something more abstract.  For a given
$T,$ and any $u_i,u_j \in \netnodes$ let $\distance_{u_i,u_j}$ be the
weight of the minimum cost path from $u_i$ to $u_j.$ Let $[[\mathtt{D}
= \distance_{u_i,u_j}]]$ be the $n \times n$ distance matrix.  Of the
$n$ nodes in $\net$, there are $K$ source nodes denoted by $\{s_1,s_2,
\ldots,s_K\} \subset \netnodes.$ Source node $s_i$ generates data
$x_i;$ denote $x=\{x_1,\ldots,x_K\}.$ A sink node ${t} \in \netnodes$
requires to obtain a function $f(x_1,x_2, \ldots, x_K)$ of the data.

We assume that schema to compute $f(x_1,x_2, \ldots, x_K)$ is given
and is represented by a directed acyclic graph $\compgraph =
(\compnodes,\compedges),$ where $\compnodes$ is the set of $p$ nodes
and $\compedges$ is the set of $q$ edges. The nodes in $\compgraph$
are denoted by $\{\omega_1, \ldots, \omega_p\}$ and correspond to
operations that need to be performed on the input data to the node and
the outgoing edges denote the flow of the result of these
operations. Thus each edge in $\compgraph$ represents a sub-function
of the inputs. The sources in the computation graph are denoted by
nodes $\{\omega_1,\omega_2,\ldots,\omega_K\}$ with node $\omega_i$
corresponding to source $x_i;$ node $\omega_p$ is the sink that
receives the function $f(x_1,x_2, \ldots, x_K).$

The direction on the edges in $\compgraph$ represent the direction of
the flow of the data.  Each edge $(\omega_i,\omega_j) \in \compedges$
has a non negative weight $\edgewt(\omega_i,\omega_j)$ associated with
it which could correspond to the number of bits used to represent the
intermediate function. 

Since $\compgraph$ is a directed acyclic graph there is a partial
order associated with its vertices. If $(\omega_i,\omega_j) \in
\compedges$ then the function at $\omega_j$ cannot be computed until
the function at $\omega_i$ is computed and the result forwarded to
$\omega_j.$ Let $\prenodes \omega$ and $\sucnodes \omega$ denote,
respectively, the immediate predecessors and successors of vertex
$\omega,$ i.e., $\prenodes \omega = \{\tau \in \compnodes | (\tau,\omega) \in \compedges\}$ and $\sucnodes \omega = \{\tau \in \compnodes | (\omega,\tau) \in \compedges\}.$
A processing cost (delay) function $\processingwt: \compnodes \times
\netnodes \mapsto \mathbb{R}^+$ is used to capture the cost (delay) of
performing a particular operation on a particular vertex of the
network. Now we define the embedding of $\compgraph$ on $\net$ as
follows.
\begin{definition}
  \label{def:embedding}
  An embedding of a computation graph $\compgraph$ on a communication
  network $\net$ is a many-to-one function $\embedding: \compnodes
  \mapsto \netnodes$ which satisfies the following conditions. 
  \begin{enumerate}
  \item $\embedding(\omega_i) = s_i$ for $i=\{1,\ldots,K\}$
  \item $\embedding(\omega_p) = t.$
  \end{enumerate}
\end{definition}
In this definition of embedding each node in the computation graph is
mapped to a single node in the network graph and the edge
$(\omega_i,\omega_j) \in \compedges$ is mapped to the shortest path
between $\embedding(\omega_i)$ and $\embedding(\omega_j).$ Alternate
definitions of an embedding are possible, e.g., an edge in
$\compgraph$ can be mapped to more than one path in $\net$ while
satisfying some continuity constraints; this is the definition of an
embedding in \cite{Shah13}. 

An embedding defines a computation and communication
sequence in $\net$ to obtain $f$ at the sink. Let $\setofembeddings$
be the set of all embeddings of $\compgraph$ in $\net$ which
follow Definition~\ref{def:embedding}.  The weight functions $T$ and
$\processingwt$ can be treated as, respectively, the communication and the
processing delays in $\net$ for computing and communicating the
sub-functions leading to computation of $f.$ We can define the delay in
computing a sub-function by a node $\omega \in \compnodes$ in the
embedding $\embedding$ as
\begin{equation}
   \begin{split}
    d(\embedding(\omega_i)) := & \max_{\omega_j \in \prenodes
      {\omega_i}} [d(\embedding(\omega_j)) +
      \edgewt(\omega_j,\omega_i)\distance_{\embedding(\omega_j),
        \embedding(\omega_i)}] + \\ &
    \processingwt(\omega_i,\embedding(\omega_i)). \label{eq:nodedelay}
  \end{split} 
\end{equation} 
Recall that $\distance_{u,v}$ is the length of the shortest path between
vertices $u,v \in \netnodes;$ thus the first term here corresponds to
the delay in obtaining all the operands at node $\embedding(\omega_i)$
and the second term corresponds to the processing delay at the
node. Setting the delay at the sources to zero, i.e.,
$d(\embedding(\omega_i)) = 0$ for all $i \in [1,K],$ we can
recursively calculate the delay of each vertices of $\compgraph$ on
$\net.$ The delay $d(\embedding)$ of an embedding $\embedding$ is defined
 as the delay of the sink, i.e., $d(\embedding) := d(\embedding(\omega_p)).$

This leads us to the first problem that we consider in this
paper:~Find an embedding $\embedding_{opt}^d$ such that the delay of
the embedding is minimum among all the embeddings for a given
$\compgraph, \net, T,\edgewt$ and $\processingwt,$ i.e., solve the
optimization problem
\begin{equation}
  \label{mindelay}
  \tag{\textsf{MinDelay}}
  \embedding_{opt}^d := \argmin_{\embedding \in \setofembeddings}
  d(\embedding).
\end{equation}

Alternatively, considering the weight functions $T,$ $\edgewt,$ and
$\processingwt$ as cost of communication and computation, e.g., the
energy cost, the cost of an embedding can be defined as
\begin{equation}
  C(\embedding) := \sum_{\omega \in \compnodes}
  \processingwt(\omega,\embedding(\omega)) + \sum_{(\omega_i,\omega_j) \in
    \compedges} \left(\edgewt(\omega_i,\omega_j)
  \distance_{\embedding(\omega_i)\embedding(\omega_j)}\right).
  \label{eq:embedding-cost}
\end{equation}
We can then find an embedding $\embedding_{opt}^c$ such that the cost
of the embedding is minimum among all the embeddings for a given
$\compgraph, \net, T,\edgewt$ and $\processingwt,$ i.e., solve the
optimization problem
\begin{equation}
  \label{mincost}
  \tag{\textsf{MinCost}}
  \embedding_{opt}^c := \argmin_{\embedding \in \setofembeddings} C(\embedding).
\end{equation}

The following example illustrates the preceding problems and the
system set up. 

\begin{example}
 \label{ex:delaycostcomp}
  Consider a computation graph $\compgraph=(\compnodes,\compedges)$
  and communication network $\net=(\netnodes,\netedges)$ shown in
  Fig.~\ref{fig:sysmodel}.  The labels of each vertex in both the
  graphs are shown in Fig.~\ref{fig:sysmodel}.  The processing
  cost (delay) for sources is assumed to be zero and for other
  vertices of $\compgraph$ it is assumed to be unity.
  \begin{figure}[tbp]
   \centering
\subfloat[]{
\begin{tikzpicture}[>=latex,scale=0.7]
 \scriptsize
 \tikzstyle{every node} = [circle,draw=black]
 \node (a) at (-2,0) [fill=blue!20] {};
 \node (b) at (0,0) [fill=blue!20] {};
 \node (c) at (2,0) [fill=blue!20] {};
 \node (d) at (-1,-1)[font=\tiny]  {$+$};
 \node (e) at (1,-1) [font=\tiny]{$+$};
 \node (f) at (0,-2)[font=\tiny] {$\times$};
 \node (g) at (0,-3) [fill=red!20]{};
 \node at (-2,0.4) [draw=none]{$\omega_1$};
 \node at (0,0.4) [draw=none]{$\omega_2$};
 \node at (2,0.4) [draw=none]{$\omega_3$};
 \node at (0,-3.4) [draw=none]{$\omega_7$};
 \node at (d) [draw=none,right=4pt] {$\omega_4$};
 \node at (e) [draw=none,right=4pt] {$\omega_5$};
 \node at (f) [draw=none,right=4pt] {$\omega_6$};
 \draw [->] (a) -- (d) ;
 \draw [->] (b) -- (d) ;
 \draw [->] (b) -- (e) ;
 \draw [->] (c) -- (e) ;
 \draw [->] (d) -- (f) ;
 \draw [->] (e) -- (f) ;
 \draw [->] (f) -- (g) ;
\end{tikzpicture}
}
\subfloat[]{
\begin{tikzpicture}[>=latex,scale=0.7]
  \scriptsize
  \tikzstyle{every node} = [circle,draw=black]
 \node (a) at (-2,0) [fill=blue!20] {};
 \node (b) at (0,0) [fill=blue!20] {};
 \node (c) at (2,0) [fill=blue!20] {};
 \node (d) at (-1,-1)  {};
 \node (e) at (1,-1) {};
 \node (f) at (0,-1) {};
 \node (g) at (0,-2) {};
 \node (h) at (0,-3) [fill=red!20]{};
 \node at (-2,0.4) [draw=none]{$s_1$};
 \node at (0,0.4) [draw=none]{$s_2$};
 \node at (2,0.4) [draw=none]{$s_3$};
 \node at (0,-3.4) [draw=none]{$t$};
 \node at (d) [draw=none,right] {$a$};
 \node at (f) [draw=none,right] {$b$};
 \node at (e) [draw=none,right] {$c$};
 \node at (g) [draw=none,right] {$d$};
 \draw [-] (a) -- (d)node [draw=none, midway,left] {$10$};
 \draw [-] (d) -- (g)node [draw=none, midway,left] {$1$} ;
 \draw [-] (b) -- (d) node [draw=none, midway,left] {$2$};
 \draw [-] (c) -- (f)node [draw=none, pos=0.05,left] {$12$};
 \draw [-] (b) -- (e)node [draw=none, pos=0.2,right] {$8$}; 
 \draw [-] (e) -- (g)node [draw=none, midway,right] {$1$};
 \draw [-] (c) -- (e) node [draw=none, midway,right] {$10$};
 \node at (-0.2,-0.5) [draw=none] {$4$};
 \draw [-] (b) -- (f);
 \node at (-0.2,-1.5) [draw=none] {$1$};
 \draw [-] (f) -- (g);
 \node at (-0.2,-2.5) [draw=none] {$1$};
 \draw [-] (g) -- (h);
\end{tikzpicture}
} 
\vspace*{6pt}
\subfloat[]{
\begin{tikzpicture}[>=latex,scale=0.7]
  \scriptsize
  \tikzstyle{every node} = [circle,draw=black]
 \node (a) at (-2,0) [fill=blue!20] {};
 \node (b) at (0,0) [fill=blue!20] {};
 \node (c) at (2,0) [fill=blue!20] {};
 \node (d) at (-1,-1) [fill=yellow!20] {};
 \node (e) at (1,-1) [fill=yellow!20] {};
 \node (f) at (0,-1) {};
 \node (g) at (0,-2) [fill=yellow!20] {};
 \node (h) at (0,-3) [fill=red!20]{};
 \node at (-2,0.4) [draw=none]{$s_1$};
 \node at (0,0.4) [draw=none]{$s_2$};
 \node at (2,0.4) [draw=none]{$s_3$};
 \node at (0,-3.4) [draw=none]{$t$};
 \node at (-0.5,-1) [draw=none] {$\omega_4$};
\node at (0.4,-1) [draw=none] {$b$};
\node at (1.5,-1) [draw=none] {$\omega_5$};
 \node at (0.5,-2) [draw=none] {$\omega_6$};
 \draw [->] (a) -- (d);
 \draw [->] (b) -- (d);
 \draw [->] (b) -- (e);
 \draw [->] (c) -- (e);
 \draw [->] (d) -- (g);
 \draw [->] (e) -- (g);
 \draw [->] (g) -- (h);
\end{tikzpicture}
} 
\subfloat[]{
\begin{tikzpicture}[>=latex,scale=0.7]
  \scriptsize
  \tikzstyle{every node} = [circle,draw=black]
 \node (a) at (-2,0) [fill=blue!20] {};
 \node (b) at (0,0) [fill=blue!20] {};
 \node (c) at (2,0) [fill=blue!20] {};
 \node (d) at (-1,-1) [fill=yellow!20] {};
 \node (e) at (1,-1) {};
 \node (f) at (0,-1) [fill=yellow!20] {};
 \node (g) at (0,-2)[fill=yellow!20] {};
 \node (h) at (0,-3) [fill=red!20]{};
 \node at (-2,0.4) [draw=none]{$s_1$};
 \node at (0,0.4) [draw=none]{$s_2$};
 \node at (2,0.4) [draw=none]{$s_3$};
 \node at (0,-3.4) [draw=none]{$t$};
 \node at (-0.5,-1) [draw=none] {$\omega_4$};
\node at (0.4,-1.1) [draw=none] {$\omega_5$};
 \node at (1.4,-1) [draw=none] {$c$};
\node at (0.5,-2) [draw=none] {$\omega_6$};
 \draw [->] (a) -- (d);
 \draw [->] (b) -- (d);
 \draw [->] (d) -- (g);
 \draw [->] (g) -- (h);
 \draw [->] (b) -- (f);
 \draw [->] (f) -- (g);
 \draw [->] (c) -- (f);
\end{tikzpicture}
} 
    \caption{(a). Computation graph $\compgraph$ for $f =
      (x_1+x_2)(x_2+x_3)$} (b). Communication Network $\net.$ The
    numbers near the edges represent the weights on them.
    (c). Minimum delay embedding $\embedding_1.$ 
    (d). Minimum cost embedding
    $\embedding_2.$
    \label{fig:sysmodel}
  \end{figure}
  An embedding $\embedding$ of $\compgraph$ on $\net$ will have
  $\embedding(\omega_i) = s_i \ \ \forall i \in [1,3]$ and
  $\embedding(\omega_7) = t.$ Now consider two embeddings
  $\embedding_1,\embedding_2$ such that $\embedding_1(\omega_4) =
  a,\embedding_1(\omega_5) = c,\embedding_1(\omega_6) = d$ and
  $\embedding_2(\omega_4) = a,\embedding_2(\omega_5) =
  b,\embedding_2(\omega_6) = d.$ These are shown in
  Figs.~\ref{fig:sysmodel}c~and~\ref{fig:sysmodel}d.

  Using \eqref{eq:embedding-cost}, it is easy to verify that
  $C(\embedding_1) = 36$ and $C(\embedding_2) = 34.$ The delays in the
  embedding $\embedding_1$ are: $d(\embedding_1(\omega_4)) =
  \max(10,2) +1= 11, d(\embedding_1(\omega_5)) = \max(8,10) +1= 11,
  d(\embedding_1(\omega_6)) = \max(12,12) +1= 13$ and finally
  $d(\embedding_1) = d(\embedding_1(\omega_7)) = 13+1=14.$ Similarly,
  $d(\embedding_2) = 16.$ Observe that the delay of $\embedding_1$ is
  lower among the two but its cost is higher than that of
  $\embedding_2.$ 
\end{example}

Now we present an example which shows that the difference between the
delay obtained by the solution of \mincost\ problem and that of the
\mindelay\ problem can be of the order of the number of sources in the
network.
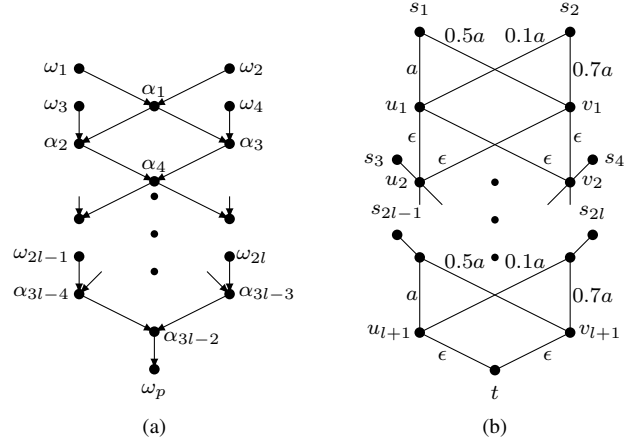
\begin{figure}[tbp]
\subfloat[]{
\begin{tikzpicture}[>=latex]
\footnotesize
 \foreach \x in {-1,1}{
  \foreach \y in {2,1.5,1,-0.5,0,-1}{
   \fill (\x,\y) circle (0.07cm);
   }
 }
 \foreach \y in {1.5,0.5}{
  \fill (0,\y) circle (0.07cm);
  \draw [->] (0,\y) -- (-1,\y-0.5);
  \draw [->] (0,\y) -- (1,\y-0.5);
  }  
 \foreach \x in {-1,1}{
  \draw [->] (\x,2) -- (0,1.5);
  \draw [->] (\x,1) -- (0,0.5);
  \draw [->] (\x,1.5) -- (\x,1);
  \draw [->] (\x,-0.5) -- (\x,-1);
  \draw [->] (\x,-1) -- (0,-1.5);
  }
\foreach \y in {0.3,-0.2,-0.7} {
 \fill (0,\y) circle (0.05cm);
 }
\fill (0,-1.5) circle (0.07cm);
\fill (0,-2) circle (0.07cm);
\draw [->] (0,-1.5) -- (0,-2);
 \draw [->] (-0.7,-0.7) --(-1,-1);
 \draw [->] (0.7,-0.7) -- (1,-1);
 \draw [->] (-1,0.3) -- (-1,0);
 \draw[->] (1,0.3) -- (1,0);
 
 \node at (-1.3,2) [draw=none] {$\omega_1$};
 \node at (1.3,2) [draw=none] {$\omega_2$};
 \node at (-1.3,1.5) [draw=none] {$\omega_3$};
 \node at (1.3,1.5) [draw=none] {$\omega_4$}; 
 \node at (-1.5,-0.5) [draw=none] {$\omega_{2l-1}$};
 \node at (1.3,-0.5) [draw=none] {$\omega_{2l}$};
 \node at (0,-2.3) [draw=none] {$\omega_p$};
 \node at (-1.3,1) [draw=none] {$\alpha_2$};
 \node at (1.3,1) [draw=none] {$\alpha_3$};
 \node at (0,1.7) [draw=none] {$\alpha_1$};
 \node at (0,0.7) [draw=none] {$\alpha_4$};
 \node at (-1.5,-1) [draw=none] {$\alpha_{3l-4}$};
 \node at (1.5,-1) [draw=none] {$\alpha_{3l-3}$};
 \node at (0.5,-1.6) [draw=none] {$\alpha_{3l-2}$};
 \end{tikzpicture}
}
\hspace*{10pt}
  \subfloat[]{
\begin{tikzpicture}[>=latex]
\footnotesize
 \foreach \x in {-1,1} {
  \foreach \y in {2,1,0} {
   \fill (\x,\y) circle (0.07cm);
  }
 }
 \foreach \x in {-1,1} {
  \foreach \y in {2,1,-1} {
    \draw [-] (\x,\y) -- (\x,\y-1); 
   }
 }
 \draw [-] (-1,0) -- (-1,-0.3);
 \draw [-] (1,0) -- (1,-0.3);
 \draw [-] (-1,0) -- (-0.7,-0.3);
 \draw [-] (1,0) -- (0.7,-0.3);
 
  \foreach \y in {2,1,-1} {
   \draw [-] (-1,\y) -- (1,\y -1);
   \draw [-] (1,\y) -- (-1,\y-1);
   } 
  \foreach \y in {-1,0,-0.5} {
   \fill (0,\y) circle (0.05cm);
   } 
   \fill (-1.3,0.3) circle (0.07cm);
   \fill (1.3,0.3) circle (0.07cm);
   \draw [-] (-1.3,0.3) -- (-1,0);
   \draw [-] (1.3,0.3) -- (1,0);
   \fill (-1.3,-0.7) circle (0.07cm);
   \fill (1.3,-0.7) circle (0.07cm);
   \draw [-] (-1.3,-0.7) -- (-1,-1);
   \draw [-] (1.3,-0.7) -- (1,-1);
   \node at (-1,2.3) [draw=none] {$s_1$};
   \node at (1,2.3) [draw=none] {$s_2$};
   \node at (-1.6,0.3) [draw=none] {$s_3$};
   \node at (1.6,0.3) [draw=none] {$s_4$};
   \node at (-1.3,-0.4) [draw=none] {$s_{2l-1}$};
   \node at (1.3,-0.4) [draw=none] {$s_{2l}$};
   \node at (0,-2.8) [draw=none] {$t$};
   \foreach \x in {-1,1} {
    \foreach \y in {-1,-2} {
     \fill (\x,\y) circle (0.07cm);
     }
     }
     \draw [-] (-1,-2) -- (0,-2.5);
     \draw [-] (1,-2) -- (0,-2.5);
     \fill (0,-2.5) circle (0.07cm);
     \node at (-1.1,1.5) [draw=none] {$a$};
     \node at (1.3,1.5) [draw=none] {$0.7a$};
     \node at (-0.4,1.95) [draw=none] {$0.5a$};
     \node at (0.4,1.95) [draw=none] {$0.1a$};
     \node at (-1.1,-1.5) [draw=none] {$a$};
     \node at (1.3,-1.5) [draw=none] {$0.7a$};
     \node at (-0.4,-1.05) [draw=none] {$0.5a$};
     \node at (0.4,-1.05) [draw=none] {$0.1a$};
     \node at (-0.7,0.3) [draw=none] {$\epsilon$};
     \node at (0.7,0.3) [draw=none] {$\epsilon$};
     \node at (-1.1,0.6) [draw=none] {$\epsilon$};
     \node at (1.1,0.6) [draw=none] {$\epsilon$};
     \node at (-0.7,-2.3) [draw=none] {$\epsilon$};
     \node at (0.7,-2.3) [draw=none] {$\epsilon$};
     \node at (-1.3,1) [draw=none] {$u_1$};
     \node at (1.3,1) [draw=none] {$v_1$};
     \node at (-1.3,0) [draw=none] {$u_2$};
     \node at (1.3,0) [draw=none] {$v_2$};
     \node at (-1.4,-2) [draw=none] {$u_{l+1}$};
     \node at (1.4,-2) [draw=none] {$v_{l+1}$};
 \end{tikzpicture}
}
  \caption{Illustrating Example~\ref{ex:compare}. (a). Computation graph with $2l$
    sources and a sink.  (b). Communication network.}
  \label{fig:delaycost_diff}
\end{figure}
\begin{example}
  \label{ex:compare}
  Consider the computation graph $\compgraph =
  (\compnodes,\compedges)$ and a network graph
  $\net=(\netnodes,\netedges)$ as shown in
  Figs.~\ref{fig:delaycost_diff}a and b respectively. The labels of
  the vertices are shown in the figure and the numbers near the edges
  of $\net$ represent the weight of that edge. Note that the structure
  between $s_1$ to $u_2$ and $s_2$ to $v_2$ is repeated in the network
  graph $l$ times. Let us call this structure as $A.$ We assume that
  the weights on the edges of $\compgraph$ are all one and the
  processing costs are zero for sources and for all other vertices it
  is assumed unity. Any embedding $\embedding$ of $\compgraph$ on
  $\net$ will have $\embedding(\omega_i) = s_i \ \ \forall i \in
  [1,2l]$ and $\embedding(\omega_p) = t.$ Consider an embedding
  $\embedding_1$ such that $\embedding_1(\alpha_1) =
  u_1,\embedding_1(\alpha_2) = u_2, \embedding_1(\alpha_3) = v_2,
  \ldots, \embedding(\alpha_{3l-2}) = u_{l+1}.$ Let us compute the
  cost of this embedding. Note that the total cost of the embedding is
  $l$ times the cost coming from the structure $A$ plus the weight of
  edge $u_{l+1}t.$ The cost due to embedding of $A$ is the sum of the
  weights of edges $s_1u_1,s_2u_1, u_1u_2,u_1v_2$ and the processing
  costs at $u_1,u_2,v_2.$ Hence the cost of the embedding is
  $C(\embedding_1) = l(3+1.1a+2\epsilon) +\epsilon.$ Similarly the
  delay of this embedding is $d(\embedding_1) = (a+\epsilon+2)l
  +\epsilon.$ Now consider another embedding $\embedding_2$ such that
  $\embedding_2(\alpha_1) = v_1,\embedding_2(\alpha_2) = u_2,
  \embedding_2(\alpha_3) = v_2, \ldots, \embedding_2(\alpha_{3l-2}) =
  v_{l+1}.$ The cost and delay for this embedding can be computed in a
  similar fashion to obtain $C(\embedding_2) = l(3+1.2a+2\epsilon)
  +\epsilon$ and $d(\embedding_2) = l(0.7a +\epsilon +2)l +\epsilon.$
  It can be shown that the first embedding $\embedding_1$ is the
  solution of \mincost\ where as $\embedding_2$ is the solution of
  \mindelay\ problem of $\compgraph$ on $\net.$ If we use the solution
  of \mincost\ to get the solution of \mindelay\ then the difference
  would be $d(\embedding_1)-d(\embedding_2) = 0.3al$ which is of the
  order of the number of sources in the graph.
\end{example}


\section{Hardness of Embedding}
\label{sec:hardness}

We begin by considering \mindelay. First consider the case when there
is no processing delay, i.e.,
$\processingwt(\omega,\embedding(\omega)) = 0$ in the network and the
computation graph is unweighted, i.e., $\edgewt(\omega_i,\omega_j) =1
\ \ \forall (\omega_i,\omega_j) \in \compedges.$
 
The delay of the embedding in this case is the delay of the longest
embedded path from any source to the sink in $\net.$ Let
$\distance_{s_it}$ be the delay of the minimum delay path between
source $s_i$ and the sink $t$ in $\net.$ Then the delay of any
embedding of $\compgraph$ on $\net$ which follows the conditions of
Definition~\ref{def:embedding} has to be more than the delay on the
longest of all the minimum delay paths from sources to sink. In other
words, $d(\embedding) \geq \max_{i \in [1,K]} (\distance_{s_it}).$
Now consider an embedding $\embedding^{*}$ which maps all the
intermediate vertices of $\compgraph$ to the sink in $\net.$ The delay
of this embedding will be $d(\embedding^{*}) = \max_{i \in [1,K]} (\distance_{s_it}).$
Comparing it with $d(\embedding)$ tells us that the embedding
$\embedding^{*}$ minimizes the delay. Hence \mindelay\ is easy to
solve if there are no processing delays and $\compgraph$ is
unweighted.

Now we analyze the hardness of \mindelay\ and \mincost\ for arbitrary
$\compgraph$ and $\net$ and show that the both the optimization
problems are NP-hard. We prove the hardness of the optimization
problems by proving that the corresponding decision versions are
NP-hard. The decision versions of the \mindelay\ and \mincost\ are
defined as follows:

\begin{definition}
  For a given $\compgraph,\net,T,\edgewt,\processingwt$ and a positive
  number $L$ the decision version of the \mindelay\ problem outputs
  \textit{``yes"} if there exists an embedding $\embedding$ of
  $\compgraph$ on $\net$ such that $d(\embedding) \leq L$ and outputs
  \textit{``no"} if no such embedding exists.
\end{definition}

\begin{definition} 
  For a given $\compgraph,\net,T,\edgewt,\processingwt$ and a positive
  number $L$ the decision version of the \mincost\ problem outputs
  \textit{``yes"} if there exists an embedding $\embedding$ of
  $\compgraph$ on $\net$ such that $C(\embedding) \leq L$ and outputs
  \textit{``no"} if no such embedding exists.
\end{definition}

Note that if one can solve the optimization version of \mindelay\
(resp. \mincost) problem in time, say $t,$ then using the solution of
that problem we can solve the corresponding decision version in time
$O(t).$ Hence the original optimization problem is atleast as hard as
its decision version. This implies that if we just prove that the
decision version of the \mindelay (resp. \mincost) is NP-hard then the
optimization version is also NP-hard. We now proceed to prove that the
decision version of our optimization problems are indeed NP-complete.

\begin{theorem}
\label{thm:delay}
The decision version of \mindelay\ problem is NP-complete.
\end{theorem}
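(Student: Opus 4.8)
The plan is to first settle membership in NP and then establish NP-hardness by reduction from a known NP-complete problem, which I would take to be 3-SAT. Membership is the easy direction: an embedding $\embedding$ is itself a polynomial-size certificate, being merely a function $\embedding:\compnodes\to\netnodes$. Given such a candidate, the value $d(\embedding)$ is computable in polynomial time by processing the vertices of $\compgraph$ in topological order and applying the recurrence~\eqref{eq:nodedelay}, where the shortest-path distances $\distance_{u,v}$ are precomputed once from $T$. One then accepts iff $d(\embedding)\le L$. Hence the decision version of \mindelay\ lies in NP.

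For hardness, given a 3-SAT formula $\phi$ with variables $x_1,\dots,x_n$ and clauses $C_1,\dots,C_m$, I would construct an instance $(\compgraph,\net,T,\edgewt,\processingwt,L)$ in which the embedding is forced to make, for each variable, a binary placement decision encoding a truth assignment, and in which every embedded source-to-sink path stays within delay $L$ exactly when the induced assignment satisfies $\phi$. The key structural fact I would exploit is that, by unrolling~\eqref{eq:nodedelay}, $d(\embedding)$ equals the maximum over all source-to-sink paths of the accumulated edge and processing delays along that path; it is a bottleneck (min--max) quantity, exactly as in the zero-processing unweighted case discussed just before the theorem. For each variable $x_i$ I would designate two network nodes $T_i$ and $F_i$ and a corresponding computation vertex whose processing delays $\processingwt(\cdot,\cdot)$ make any placement other than at $T_i$ or $F_i$ already exceed $L$, thereby enforcing a clean true/false choice. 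For each clause $C_j$ I would attach a gadget subgraph of $\compgraph$ whose longest embedded path has delay at most $L$ iff at least one of its three literals is placed consistently with a satisfying assignment, using the network distances $\distance_{\cdot,\cdot}$ so that placing all three literals falsely is the unique configuration pushing a path strictly above $L$. Requiring $d(\embedding)\le L$ then amounts precisely to the conjunction over all clauses, i.e. to satisfiability of $\phi$.

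The main obstacle is the gadget calibration, which is delicate precisely because $d(\embedding)$ is a bottleneck quantity controlled by a single global threshold $L$. I must simultaneously guarantee that (i) the per-variable choice is genuinely binary, so that no alternative placement and no use of the many-to-one nature of $\embedding$ (several computation vertices collapsing onto one network node) can cheat the threshold; (ii) each clause gadget keeps its longest path below $L$ under all seven satisfying literal patterns and strictly above $L$ under the single falsifying pattern; and (iii) the variable and clause gadgets share the network distances consistently, so that one placement of $x_i$ feeds every clause in which $x_i$ occurs. The subtle verification is that no unintended short path through the shared network shortcuts a gadget and silently repairs a violated clause. Once the delays and distances are tuned so that these three properties hold with a common $L$, correctness of the reduction in both directions and its polynomial size are routine; combined with NP membership, this yields NP-completeness of \mindelay.
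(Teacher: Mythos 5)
Your NP-membership argument is fine and matches the paper's. The hardness half, however, is a plan rather than a proof: every load-bearing step is deferred. You say you ``would construct'' variable gadgets forcing a binary placement and clause gadgets whose longest path exceeds $L$ exactly under the falsifying pattern, and you correctly list the three calibration obligations (genuinely binary choices, correct behaviour under all eight literal patterns, no shortcut through the shared network), but you never exhibit the gadgets or verify any of these properties. Worse, the one structural fact you do state works against the most natural design: since $d(\embedding)$ is a \emph{max} over source-to-sink paths, a clause gadget with three parallel paths (one per literal) satisfies $d\le L$ only if \emph{all} three paths are short, which encodes a conjunction, not a disjunction. To get an OR you must route the clause through an existentially placed intermediate vertex and let the placement choice select the witnessing literal; but then you must also rule out placements that exploit the many-to-one nature of $\embedding$ --- recall that when processing delays vanish the problem is trivial precisely because all intermediate vertices can collapse onto the sink --- so the processing-delay function $\processingwt$ has to do real work that you have not specified. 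As written, the reduction could not be checked or even instantiated.

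For comparison, the paper avoids all of this by reducing from Precedence Constraint Scheduling with fixed mapping: the task DAG becomes $\compgraph$ itself, the $m$ processors become a complete network with every edge delay set to $\epsilon = 1/|H|^2$, and every processing delay is $1$. Unit processing delays make the embedding delay equal to the schedule makespan up to an additive term $b\epsilon<1$, so deadlines transfer exactly after taking floors, and no gadgetry is needed. If you want to pursue the 3-SAT route you would need to supply the full gadget construction and the verification you yourself flagged as the main obstacle; until then the argument has a genuine gap.
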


\begin{proof}
  We first prove that the decision version of \mindelay\ is NP-hard by
  giving a reduction from the NP-complete problem \textit{Precedence
    Constraint Scheduling with fixed mapping (PCS-FM)}\cite{Garey79}.

  PCS-FM problem is defined as follows. Given a set of tasks $H$ with
  a partial order $\lessdot$ on it, task $h$ having length $l(h) = 1 \
  \ \forall h \in H,$ a set $\tau \subset H,$ and $m \in
  \mathbb{Z}^{+}$ processors, find a schedule $\sigma$ of tasks on the
  processors which meets an overall deadline $D^{\prime},$ maps each
  $\tau_i \in \tau$ to a particular processor $p(\tau_i)$ and obeys
  the precedence constraint that if $h_i \lessdot h_j$ then
  $\sigma(h_j) \geq \sigma(h_i) + 1.$ To the best of our knowledge the
  hardness of PCS-FM problem has not been proved in the literature and
  we provide the proof of NP-completeness of PCS-FM in Appendix~\ref{sec:PCS-FM}. 

  We first give a reduction from an instance $\phi =
  (H,\lessdot,\tau,p(\tau),m,l,D^{\prime})$ of PCS-FM to an instance
  of \mindelay\ $\psi =
  (\compgraph,\net,\gamma,\lambda,T,\processingwt,D),$ where
  $\compgraph = (\compnodes,\compedges)$ and $\net =
  (\netnodes,\netedges)$ are the computation and communication graphs
  respectively. The set $\gamma \subset \compnodes$ is to be mapped to
  $\lambda \subset \netnodes$ under any embedding $\embedding$ and
  $T,\processingwt$ are the communication and processing delay of the
  network. Note that the set of tasks $H$ along with the partial order
  creates a DAG $\lessdot$ and we define $\compgraph = \lessdot.$ We
  define $\net$ to be a complete graph on $m$ processors. Let $\gamma
  = \tau$ and $\lambda = p(\tau).$ The transmission delay $T$ between
  any two vertices of $\net$ is taken as $\epsilon = \frac{1}{|H|^2}$
  and $\processingwt(\omega,u) = 1$ for all $\omega \in \compnodes, u
  \in \netnodes.$ Finally $D=D^{\prime} +1.$

  We have to prove that there is a schedule $\sigma$ of $\phi$ which
  finishes in time $D^{\prime}$ if and only if there is an embedding
  $\embedding$ of $\psi$ with delay $D^{\prime} \leq D \leq
  D^{\prime}+1.$ The forward direction is easy to prove. If there is a
  schedule of $\phi$ which finishes in $D^{\prime}$ time and maps a
  task $h \in H$ to a processor $q$ then we can create an embedding of
  $\psi$ which maps the same vertex $h \in \compnodes$ to a vertex $q
  \in \netnodes.$ Note that because $\gamma = \tau$ and $\lambda =
  p(\tau)$ the conditions of Definition~\ref{def:embedding} are met in
  this embedding. The delay of any vertex $u \in \compgraph$ in this
  embedding will be at most the time at which the task $u$ finishes in
  the schedule $\sigma$ plus the number of times edges in $\net$ are
  used because of the same precedence order. Hence the delay of this
  embedding will be $D^{\prime} < D < D^{\prime}+ |H|\epsilon \leq
  D^{\prime} + \frac{1}{|H|} < D^{\prime} +1.$

  To complete the proof we need to prove that if there is an embedding
  $\embedding$ of $\psi$ of delay $\alpha \in \mathbb{R}^+$ then there
  is a schedule $\sigma$ of $\phi$ which finishes in time $\lfloor
  \alpha \rfloor.$ We will create a schedule $\sigma$ from the
  embedding $\embedding.$ If a vertex $h \in \compnodes$ is mapped to
  a vertex $u \in \netnodes$ then in the schedule $\sigma$ also the
  task $h$ is executed by processor $u.$ Because $\gamma = \tau$ and
  $\lambda = p(\tau)$ the tasks in $\tau$ are mapped to $p(\tau)$ in
  this schedule also. Let the total number of edge uses in this
  embedding be $b.$ Note that $b < |H|^2$ in any embedding where $|H|$
  are the number of vertices in graph $\compgraph.$ The total
  transmission delay in the embedding is $b \epsilon < 1.$ Recall that
  the processing delays are all $1$ hence the delay of the embedding
  can be written as $\alpha = \lfloor \alpha \rfloor + b \epsilon.$ It
  is easy to verify that the time required by the schedule $\sigma$ to
  complete is $\lfloor \alpha \rfloor.$ This proves that the decision
  of \mindelay\ is NP-hard.

  Given an instance of \mindelay\ problem an embedding $\embedding$
  can be guessed non-deterministically and checked whether
  $d(\embedding) \leq L$ in polynomial time. Thus the decision version
  of \mindelay\ problem is in NP and our reduction proves that it is
  in fact NP-complete.
\end{proof}

We now look at the problem of approximating the \mindelay\ problem. We
say that a polynomial time approximation algorithm has a performance
guarantee $\alpha >1$ if it outputs a feasible solution of the problem
which has delay at most $\alpha$ times the optimal solution.  We prove
the following:

\begin{theorem}
  \label{thm:delayapprox}
  Unless P=NP, an instance of the \mindelay\ problem with
  $(\compgraph,\net,T,\edgewt,\processingwt)$ and with unit processing
  delays and unit weights on the edges of $\compgraph,$ does not have
  a polynomial-time approximation algorithm with performance guarantee
  strictly less than $100/99$ if its solution is greater than $10.$
\end{theorem}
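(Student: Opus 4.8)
The plan is to upgrade the NP-completeness reduction of Theorem~\ref{thm:delay} into a gap-preserving reduction, so that a sufficiently good approximation for \mindelay\ would decide an NP-hard promise (gap) version of precedence-constrained scheduling. Since that reduction already operates precisely in the regime of the present statement---unit processing delays $\processingwt(\omega,u)=1$ and the unweighted computation graph $\compgraph=\lessdot$---it can be reused verbatim. The feature to exploit is its quantitative guarantee: a schedule of makespan $M$ corresponds to an embedding of delay lying in $[M,M+1)$ (the residual coming from the total communication delay $b\epsilon<1$), and conversely an embedding of delay $\alpha$ yields a schedule of makespan $\lfloor\alpha\rfloor$. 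Thus $\lfloor d(\embedding_{opt}^d)\rfloor$ equals the optimal makespan, and any separation between ``small'' and ``large'' makespan instances transfers, up to this additive-$1$ slack, to a separation in optimal delay.

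First I would fix a promise version of PCS-FM for which it is NP-hard to distinguish instances whose optimal makespan is at most $a$ from those whose optimal makespan is at least $b$, with $b>a$; such a makespan gap follows from the classical hardness of minimizing the makespan of unit-length jobs under precedence, or can be extracted by adapting the construction of Appendix~\ref{sec:PCS-FM}. Because a single unit of makespan gap is exactly what the additive-$1$ slack of the embedding reduction can erase, the second step is to \emph{amplify} the gap: chaining $k$ copies of the same hard instance in series (imposing a complete precedence from every task of one copy to every task of the next) forces the copies to run sequentially, so both makespan thresholds are multiplied by $k$, giving the separation $ka$ versus $kb$ while preserving the ratio $b/a$. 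Feeding the chained instance through the Theorem~\ref{thm:delay} reduction then yields \mindelay\ instances whose optimal delay is below $ka+1$ in the ``yes'' case and at least $kb$ in the ``no'' case; choosing $k$ large enough forces the optimal delay past $10$, which is exactly the hypothesis ``its solution is greater than $10$.''

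Finally I would argue by contradiction: an approximation algorithm with performance guarantee $\rho<100/99$ run on the chained instance returns a feasible embedding of delay strictly below $\rho(ka+1)$; as soon as $\rho(ka+1)\le kb$, this value separates the ``yes'' instances (delay $<\rho(ka+1)$) from the ``no'' instances (delay $\ge kb$) and would decide the promised NP-hard problem, forcing P$=$NP. The numerology $100/99$ and the threshold $10$ are then pinned down by balancing the surviving ratio $kb/(ka+1)$ against $100/99$ for the chosen base gap $(a,b)$ and the smallest admissible $k$. The main obstacle is precisely this interplay between the unit granularity of makespan and the additive communication slack: a naive one-unit makespan gap is annihilated by the $b\epsilon$ term, so the whole argument hinges on first securing a makespan gap strictly larger than the slack and then amplifying the makespan into the ``$>10$'' regime so that the surviving ratio can be driven below $100/99$. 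Keeping $\epsilon=1/|H|^2$ guarantees the slack stays below $1$ even after chaining (since the edge-use count remains below $|H|^2$), which is what makes the amplification go through.
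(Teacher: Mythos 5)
Your proposal is correct in outline, but it takes a genuinely different route from the paper. The paper does not construct or amplify a gap instance at all: it takes the published inapproximability bound for PCS (Observation~5.1 of \cite{Hoogeveen98}, performance guarantee $4/3$), and propagates that \emph{ratio} algebraically through the two additive reductions, writing $\mathsf{sol}_2=\mathsf{sol}_1+2$ and $\mathsf{sol}_2\le\mathsf{sol}_3\le\mathsf{sol}_2+1$ and then bounding the degradation of the ratio by the side conditions $\mathsf{opt}_1\ge 1$ (which gives $10/9$ for PCS-FM) and $\mathsf{opt}_2\ge 10$ (which gives $100/99$ for \mindelay\ and is the source of the ``solution greater than $10$'' hypothesis). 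You instead go back to the underlying NP-hard gap (makespan $\le a$ versus $\ge b$ for unit jobs under precedence), amplify it by serial composition of $k$ copies so that the additive losses of $+2$ and $+b\epsilon<1$ become negligible, and then read off the inapproximability from the amplified gap. Both arguments are sound, and the key quantitative facts you rely on (delay in $[M,M+1)$ for a makespan-$M$ schedule, $\lfloor\alpha\rfloor$ in the reverse direction, slack $b\epsilon<1$ preserved under chaining because $\epsilon=1/|H|^2$ is defined for the enlarged instance) are exactly the ones established in the proof of Theorem~\ref{thm:delay}. What each approach buys: the paper's is shorter and needs no new construction, but it is tied to the specific constants and its step ``hence $\mathsf{sol}_1>\frac43\mathsf{opt}_1$'' is really a statement about the existence of bad instances for any polynomial-time algorithm rather than about every feasible solution, so it has to be read with some care; your gap-amplification argument is more work (you must verify the serial-composition lemma and state the base gap problem explicitly), but it is cleaner logically and actually yields a strictly stronger conclusion --- the surviving ratio $kb/(ka+1)$ tends to $b/a$, so you rule out any guarantee below $6/5$ (or $4/3$ if you chain before the PCS-to-PCS-FM step), of which the stated $100/99$ bound on instances with optimum exceeding $10$ is a special case obtained at small $k$.
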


\begin{proof}
  Note that while proving the hardness of \mindelay\ we first reduced
  an instance of a PCS problem to an instance of a PCS-FM problem and
  then we reduced PCS-FM to \mindelay\ problem. Let $\mathsf{opt}_1$
  and $\mathsf{sol}_1$ be the deadline achieved by the optimal and a
  feasible solution of PCS problem respectively. Similarly, let
  $\mathsf{opt}_2$ and $\mathsf{sol}_2$ be the deadline achieved by
  the optimal and a feasible solution of the PCS-FM problem. Finally,
  let $\mathsf{opt}_3$ and $\mathsf{sol}_3$ be the optimal and
  feasible solutions for \mindelay. While proving the NP-completeness
  of PCS-FM (in Appendix~\ref{sec:PCS-FM}) we showed that $\mathsf{opt}_2 =
  \mathsf{opt}_1 +2.$ We also showed that if a schedule of PCS problem
  achieves a deadline $\mathsf{sol}_1,$ then there is a schedule of
  PCS-FM with $\mathsf{sol}_2 = \mathsf{sol}_1 +2.$ Let PCS-FM have a
  polynomial-time approximation algorithm with solution
  $\mathsf{sol}_2 = x \cdot \mathsf{opt}_2.$ By Observation~5.1 of
  \cite{Hoogeveen98} we know that the PCS problem does not have
  polynomial-time approximation algorithm with performance guarantee
  less than $4/3.$ Hence, $\mathsf{sol}_1 > \frac{4}{3} \mathsf{opt}_1.$
  Substituting the relation between $\mathsf{sol}_1$ and
  $\mathsf{sol}_2$ in the above equation we get, 
  $\mathsf{sol}_2 -2 > \frac{4}{3} \mathsf{opt}_1.$ Thus, 
  $ x \mathsf{opt}_2 > \frac{4}{3} \mathsf{opt}_1 +2$ and $ x > \frac{4}{3}   \frac{\mathsf{opt}_1}{\mathsf{opt}_2} +\frac{2}{\mathsf{opt}_2}.$
  We now observe that $\mathsf{opt}_1 \geq 1 \Rightarrow
  \mathsf{opt}_2 \geq 3$ and hence $\mathsf{opt}_1 >
  \frac{1}{3}\mathsf{opt}_2$ we can write, $x > \frac{4}{3} \frac{\mathsf{opt}_1}{\mathsf{opt}_2} + 1 - \frac{\mathsf{opt}_1}{\mathsf{opt}_2}$ and finally $x > 1 + \frac{1}{9}.$
  Hence, unless P=NP, PCS-FM cannot have a polynomial time
  approximation algorithm with performance guarantee less than $10/9.$
  While proving the hardness for \mindelay\ we showed that for any
  instance of PCS-FM with $\mathsf{sol}_2$ we can get an instance of
  \mindelay\ problem with solution $\mathsf{sol}_3$ such that
  $\mathsf{sol}_2 \leq \mathsf{sol}_3 \leq \mathsf{sol}_2 +1.$ Let
  \mindelay\ have a polynomial time approximation algorithm with
  solution $\mathsf{sol}_3 = y \cdot \mathsf{opt}_3.$ We know that
  $\mathsf{sol}_2 > \frac{10}{9}\mathsf{opt}_2.$ Substituting the
  relation between $\mathsf{sol}_3$ and $\mathsf{sol}_2$ in the above
  expression we get, $ \mathsf{sol}_2 > \frac{10}{9}\mathsf{opt}_2$ and 
    $\mathsf{sol}_3 = y \mathsf{opt}_3 > \frac{10}{9}\mathsf{opt}_2.$ This implies that
    $y > \frac{10}{9} \frac{\mathsf{opt}_2}{\mathsf{opt}_3}.$
  Observe that, if $\mathsf{opt}_2 \geq 10$ then $\mathsf{opt}_3 \leq
  \mathsf{opt}_2 +1$ (because of the reduction) implies that
  $\mathsf{opt}_2 > \frac{1}{1.1}\mathsf{opt}_3.$ So we get $y >
  100/99.$ This completes the proof.
\end{proof}

We now consider the decision version of \mincost.

\begin{remark}
  \label{rm:directed}
  Recall that the cost of an embedding $\embedding$ is computed using
  \eqref{eq:embedding-cost} which does not consider the direction on
  the edges of $\compgraph.$ It only considers the weight $\edgewt$ on
  any edge of $\compgraph.$ Thus the cost of an embedding does not
  depend on the directions of the edges and the solution of \mincost\
  problem is same irrespective of whether the computation graph
  $\compgraph$ has directions or not.
\end{remark}

\begin{theorem}
  \label{thm:mincost}
  The decision version of the \mincost\ is NP-complete.
\end{theorem}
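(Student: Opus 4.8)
The plan is to prove NP-completeness of the decision version of \mincost. First I would establish membership in NP, which is easy: given an embedding $\embedding$, we can evaluate $C(\embedding)$ using \eqref{eq:embedding-cost} in polynomial time (the distances $\distance_{u,v}$ are precomputed via all-pairs shortest paths), and check whether $C(\embedding) \leq L$. So the crux is NP-hardness, which I would establish by a reduction from a known NP-complete problem. Given the abstract's stronger claim that \mincost\ is MAX SNP-hard, and given the structure of the cost function---a sum of edge terms $\edgewt(\omega_i,\omega_j)\distance_{\embedding(\omega_i)\embedding(\omega_j)}$ plus processing terms---the natural source problem is one whose objective decomposes as a sum over pairs of adjacent vertices whose placement on a host structure we control. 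By Remark~\ref{rm:directed}, directions on $\compgraph$ are irrelevant to \mincost, so I would treat $\compgraph$ as an undirected weighted graph; this suggests reducing from a graph-partition or graph-embedding problem such as \textsc{Multiway Cut}, \textsc{Simple Max Cut}, or a quadratic-assignment-flavored problem.

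The key steps I would carry out in order are as follows. First, fix a source problem---I would try \textsc{Multiway Cut} (or \textsc{Multiterminal Cut}), which is NP-complete and MAX SNP-hard and thus would also support the stronger MAX SNP-hardness claim. Second, from an instance $(G, \{t_1,\dots,t_k\})$ of that problem, construct a \mincost\ instance: build the network $\net$ so that its nodes correspond to the possible ``labels'' (the terminal classes), with inter-label distances chosen to encode cut costs; build the computation graph $\compgraph$ so that its structure forces each non-source vertex to be mapped to one of these label-nodes, with edges of $\compgraph$ mirroring edges of $G$. The constraints in Definition~\ref{def:embedding}---forcing $\embedding(\omega_i)=s_i$ and $\embedding(\omega_p)=t$---are exactly the mechanism to pin the terminals to their prescribed labels, so I would use the source nodes and sink of $\compgraph$ to play the role of the fixed terminals. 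Third, calibrate the weights $\edgewt$, $T$ (hence $\distance$), and $\processingwt$ so that the cost $C(\embedding)$ equals (an affine function of) the cut size, making processing costs either zero or a constant offset. Fourth, set the threshold $L$ accordingly and prove the two-way equivalence: an embedding of cost $\leq L$ exists if and only if a cut/partition of the target size exists.

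The main obstacle I expect is the calibration step---designing $\net$ and the distance matrix $\distance$ so that the edge-cost terms $\edgewt(\omega_i,\omega_j)\distance_{\embedding(\omega_i)\embedding(\omega_j)}$ faithfully reproduce the source problem's objective while simultaneously ensuring that no ``cheating'' embedding can do better by mapping a computation vertex to a network node that is neither a designated label-node nor a terminal. Because the embedding may send a vertex to \emph{any} node of $\net$, I would need gadget edges with large (or carefully structured) weights that make any such deviation strictly more expensive than the intended labelings; equivalently, I must ensure the intended label-nodes are the unique cost-minimizing images. A secondary subtlety is that $\distance$ is a shortest-path metric and therefore must satisfy the triangle inequality, so I cannot choose the pairwise costs completely freely; I would either verify that my intended cost structure is realizable as a metric or introduce auxiliary nodes so that the desired shortest-path distances are achieved. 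Once the gadget enforces faithful labeling and the metric constraint is respected, the equivalence of optima and the choice of $L$ follow by direct accounting.
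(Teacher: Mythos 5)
Your plan matches the paper's proof: it reduces from Multiterminal Cut by taking $\net$ to be the complete graph on exactly $k$ unit-weight vertices (one per terminal) and $\compgraph$ to be the instance graph itself, so that the cost of an embedding is precisely the number of edges cut by the induced $k$-partition, and membership in NP is argued exactly as you do. The two obstacles you anticipate evaporate under this choice---there are no non-label nodes in $\net$ to ``cheat'' with, and the uniform unit metric trivially satisfies the triangle inequality---so the calibration step is immediate.
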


\begin{proof}
  We actually prove that the decision version of \mincost\ is NP-hard
  even when the processing costs are zero and the costs on the edges
  of $\compgraph$ and $\net$ are all one. In this case the cost of the
  embedding $\embedding$ is given by $C(\embedding) := \sum_{(\omega_i,\omega_j) \in \compedges} \distance_{\embedding(\omega_i)\embedding(\omega_j)},$
  where $\distance_{uv}$ is the shortest distance between the vertices
  $u,v \in \netnodes.$

  We prove the hardness of the decision version of \mincost\ by giving
  a reduction from the unweighted version of \textit{Multiterminal
    Cut} problem which is NP-complete \cite{Dahlhaus94}.
  Multiterminal Cut problem is defined as follows: Given an arbitrary
  graph $\net^{\prime} = (\netnodes^{\prime},\netedges^{\prime})$ and
  a set $S = \{s_1,s_2,\ldots,s_k\} \subset \netnodes^{\prime}$ of $k$
  specified vertices, find the minimum number of edges $\netedges_s
  \subset \netedges^{\prime}$ such that the removal of $\netedges_s$
  from $\netedges^{\prime}$ disconnects each vertex in $S$ from all
  the other vertices in $S.$

  The cost of an embedding can be computed in polynomial time using
  \eqref{eq:embedding-cost}. Hence, given an instance of the decision
  version of the \mincost\ problem, one can guess an embedding in a
  non-deterministic way and check whether its cost is less than $L$ or
  not in polynomial time. Thus the decision version of the \mincost\
  problem is in NP. To prove the NP-hardness of the problem we will
  first show a transformation of an instance of Multiterminal Cut
  problem $\psi = (\net^{\prime},S,D)$ to an instance of the decision
  version of \mincost\ $\phi=(\net,\compgraph,\gamma,\lambda,D).$ Then
  we will show that there exists a set of edges in $\net^{\prime}$ of
  size $D$ which separates all the vertices of $S$ from all other
  vertices in $S$ if and only if there is an embedding $\embedding$
  such that a vertex $\gamma_i \in \gamma$ is mapped to a vertex
  $\lambda_i \in \lambda$ and has cost equal to $D.$

  From $\psi$ define an instance of decision version of \mincost\
  $\phi$ as follows: Let $\net$ be a complete graph on $\netnodes=
  \{u_1,\ldots,u_k\}$ where $k=|S|,$ and $\compgraph = \net^{\prime}.$
  Note that in this case $\compgraph$ is undirected. Define
  $\gamma=S$ and $\lambda = \netnodes$ such that a vertex $\gamma_i$
  is mapped to $u_i \in \netnodes.$ In other words, $k$ distinct
  vertices of $\compnodes$ are mapped to distinct vertices of
  $\netnodes.$ Now we prove that there is an embedding $\embedding$ of
  cost $D$ if and only if $\psi$ has a Multiterminal Cut of size equal
  to $D.$

  The ``if'' part is easy to see. If there is a minimum Multiterminal
  Cut $\netedges_s$ of size $D$ which divides the vertex set
  $\netnodes^{\prime} = \netnodes_1^{\prime} \cup \ldots
  \netnodes_k^{\prime}$ into $k$ disjoint subsets then define
  $\embedding$ such that each vertex in $\netnodes_i^{\prime}$ is
  mapped to $u_i \in \netnodes.$ Then the cost of the embedding is the
  total number of edges which go from $\netnodes_i^{\prime}$ to
  $\netnodes_j^{\prime}$ for $i \neq j.$ This is nothing but the size
  of the set $\netedges_s$ which is equal to $D.$

  To complete the proof we need to show that if there is no minimum
  Multiterminal Cut of $\psi$ of size $D$ then there is no embedding
  (which maps the vertices of $\gamma$ to vertices of $\lambda)$ of
  $\phi$ with cost $D.$ Let us assume that there is no minimum
  Multiterminal Cut of $\psi$ of size $D$ but there is an embedding
  for $\phi$ with cost $D.$ It implies that there is a mapping of
  $\compnodes$ on $k$ different vertices of $\netnodes$ such that
  $\gamma_i=s_i$ is mapped to $u_i.$ Let us denote all the vertices of
  $\compnodes$ that are mapped to $u_i$ by $\compnodes_i.$ The cost of
  the embedding is equal to the number of edges between sets
  $\compnodes_i$ and $\compnodes_j$ for $i \neq j$ which is equal to
  $D.$ Now it is easy to see that if we divide the vertices of
  $\net^{\prime}$ in $k$ disjoint subsets such that all the vertices
  in $\compnodes_i$ are in the same set then we can create a
  Multiterminal Cut of the graph $\net^{\prime}$ which has cost
  exactly $D.$ But this contradicts our assumption. Hence if there is
  an embedding function $\embedding$ for $\phi$ of cost $D$ then there
  is a Multiterminal Cut of $\psi$ of same size. This proves that the
  decision version of \mincost\ is NP-hard when the computation graph
  is undirected. From Remark~\ref{rm:directed}, thus the decision
  version of \mincost\ with directed computation graph is also
  NP-hard.

  So far, we have not considered any weight functions. From
  \cite{Dahlhaus94} we know that the Multiterminal Cut problem for
  weighted graphs is also NP-complete. And a simple modification in
  our reduction will prove that decision version of \mincost\ problem
  with weight functions is also NP-hard.
\end{proof}

It is also shown in \cite{Dahlhaus94} that Multiterminal Cut problem
is MAX SNP-hard. To prove any problem being MAX SNP-hard it is
sufficient to give a linear reduction to it from a known MAX SNP-hard
problem. The linear reduction is defined as follows:

\begin{definition}
  Let $\Pi$ and $\Pi^{\prime}$ be two optimization problems. Then we
  say that $\Pi$ linearly reduces to $\Pi^{\prime}$ if there are two
  polynomial time algorithms $A,B$ and two constants $\alpha,\beta >
  0$ such that
  \begin{enumerate}
  \item Given an instance $\pi$ of $\Pi$ with an optimal cost
    $\mathsf{opt}(\pi)$ an algorithm $A$ produces an instance
    $\pi^{\prime} = A(\pi)$ of $\Pi^{\prime}$ such that the cost of an
    optimal solution for $\pi^{\prime} (\mathsf{opt}(\pi^{\prime}))$
    is at most $\alpha\ \mathsf{opt}(\pi),$ i.e., $\mathsf{opt}(\pi^{\prime}) \leq \alpha\ \mathsf{opt}(\pi).$
  \item Given $\pi, \pi^{\prime} = A(\pi)$ and any solution $y$ of
    $\pi^{\prime}$ there is an algorithm $B$ which produces a solution
    $x$ of $\pi$ such that $|\mathsf{cost}(x) - \mathsf{opt}(\pi)| \leq \beta |\mathsf{cost}(y) - \mathsf{opt}(\pi^{\prime})|.$
  \end{enumerate}
\end{definition}

Note that in the proof of Theorem~\ref{thm:mincost} we use polynomial
time algorithms to reduce an instance $\psi$ of the Multiterminal Cut
problem to an instance $\phi$ of the \mincost\ problem and we proved
that $\mathsf{cost}(\psi) = \mathsf{cost}(\phi).$ We can also get a
solution of $\phi$ from a solution of $\psi$ and vice versa with
parameters $\alpha = \beta = 1.$ Hence the reduction we used to prove
that \mincost\ is NP-complete is in fact a linear reduction of
Multiterminal Cut problem to \mincost. We thus have the following
corollary.

\begin{corollary}
  Problem \mincost\ is MAX SNP-hard and hence does not have any
  polynomial time approximation scheme unless P=NP \cite{Arora92}.
\end{corollary}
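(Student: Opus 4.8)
The plan is to show that the reduction from Multiterminal Cut to \mincost\ constructed in the proof of Theorem~\ref{thm:mincost} is in fact a \emph{linear reduction} (an $L$-reduction) in the sense of the definition just stated, and then to invoke the fact, established in \cite{Dahlhaus94}, that unweighted Multiterminal Cut is MAX SNP-hard, together with the closure of MAX SNP-hardness under $L$-reductions. The no-PTAS consequence then follows immediately from the characterization of MAX SNP in \cite{Arora92}. Thus the heart of the matter is to exhibit the two polynomial-time algorithms $A$ and $B$ and the constants $\alpha,\beta$ promised by the definition, and in particular to check condition~(2) for an arbitrary (not merely optimal) feasible solution.

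For algorithm $A$ I would reuse verbatim the transformation from the proof of Theorem~\ref{thm:mincost}: given an instance $\psi = (\net^{\prime},S)$ of Multiterminal Cut, set $\compgraph = \net^{\prime}$, let $\net$ be the complete graph on $k = |S|$ vertices $\{u_1,\ldots,u_k\}$ with unit edge weights and zero processing costs, and fix $\gamma = S$, $\lambda = \netnodes$ with $\gamma_i$ mapped to $u_i$. This runs in polynomial time. To verify condition~(1) I would appeal to the cost-preserving correspondence already argued in that proof: any cut of size $D$ yields an embedding of cost $D$ and conversely, so the optima coincide, $\mathsf{opt}(\phi) = \mathsf{opt}(\psi)$, and condition~(1) holds with $\alpha = 1$.

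The step requiring the most care is algorithm $B$, which must map an \emph{arbitrary} feasible solution of $\phi$ back to a feasible solution of $\psi$ while controlling the cost gap. Given any embedding $\embedding$ of $\phi$ that respects $\gamma_i \mapsto u_i$, I would partition $\compnodes = \compnodes_1 \cup \cdots \cup \compnodes_k$ with $\compnodes_i = \embedding^{-1}(u_i)$ and output the set of edges of $\compgraph = \net^{\prime}$ whose endpoints lie in different blocks. Since each $s_i = \gamma_i$ lands in a distinct block, this edge set is a valid multiterminal cut, and because $\net$ is complete with unit weights, $C(\embedding)$ equals exactly the number of such cross edges, i.e. the size of the recovered cut. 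Hence $\mathsf{cost}(x) = C(\embedding)$, and combined with $\mathsf{opt}(\phi) = \mathsf{opt}(\psi)$ this gives $|\mathsf{cost}(x) - \mathsf{opt}(\psi)| = |C(\embedding) - \mathsf{opt}(\phi)|$, so condition~(2) holds with $\beta = 1$.

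With both conditions verified at $\alpha = \beta = 1$, the reduction is an $L$-reduction from a MAX SNP-hard problem, and therefore \mincost\ is MAX SNP-hard. Applying the theorem of \cite{Arora92}, that no MAX SNP-hard problem admits a polynomial-time approximation scheme unless P=NP, yields the stated inapproximability. The only subtlety worth double-checking is that $B$ remains well defined on embeddings that map several nonsource vertices of $\compgraph$ to the same $u_i$, producing a possibly very unbalanced partition; but the cross-edge set is a valid multiterminal cut regardless of the block sizes, so no further case analysis is needed.
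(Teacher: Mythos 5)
Your proposal is correct and follows essentially the same route as the paper: both observe that the reduction from unweighted Multiterminal Cut in the proof of Theorem~\ref{thm:mincost} is already a linear reduction with $\alpha=\beta=1$ and then invoke the MAX SNP-hardness of Multiterminal Cut from \cite{Dahlhaus94} and the PTAS characterization of \cite{Arora92}. You merely spell out algorithm $B$ (the partition by $\embedding^{-1}(u_i)$ and the cross-edge cut) more explicitly than the paper, which states the $\alpha=\beta=1$ correspondence without elaboration.
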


The NP-hardness of \mincost\ can also be proved by reduction from
another well known NP-complete problem $k$-clique \cite{Garey79}.  The
$k$-clique problem is defined as follows: Given an arbitrary graph
$\net^{\prime}$ and a positive integer $k$ check whether
$\net^{\prime} = (\netnodes^{\prime},\netedges^{\prime})$ has a clique
(or a complete subgraph) of size $k.$ Since we know that the
$k$-clique problem is $W[1]-$complete \cite{Downey99}, the reduction
from it also implies that \mincost\ does not have any fixed-parameter
tractable algorithm and is also hard for $W[1].$

A variation of \mincost\ is considered in \cite{Fernandez-Baca89} in
that the communication costs (computed as $\edgewt(\omega_i,\omega_j)
\cdot \distance_{\embedding(\omega_i)\embedding(\omega_j)}$ in this
paper) are not necessarily zero if $\embedding(\omega_i) =
\embedding(\omega_j).$ Further, the source and the sink nodes are not
fixed like in this paper. The key complexity result is Lemma~2.1 which
shows that their variation of \mincost\ with zero-one communication
costs is NP-complete by reducing it from the planar $3$-SAT
problem. Their proof technique does not allow the communication to be
necessarily zero if $\embedding(\omega_i) = \embedding(\omega_j).$ For
non zero-one communication costs, \cite{Fernandez-Baca89} also gives a
polynomial time algorithms for their variation when $\compgraph$ is a
partial $k$-tree and \textit{almost trees with parameter $k.$} 

As is the case with many NP-complete problems our problems also become
tractable when $\compgraph$ has special structures. We consider three
such structures that have wide applications---the tree, layered graphs
and bounded treewidth graphs. These are considered next.

\section{$\compgraph$ is a Tree}
\label{sec:tree}
Many functions that are useful on sensor networks, e.g., average,
maximum, minimum etc., can be represented by directed tree
graphs. Operations that are required to resolve many database queries
can also be represented as directed tree structures.  The trees
representing a $\compgraph$ is from a class of trees that have a set
of leaf vertices whose in-degree is zero and a root vertex whose
out-degree is zero. We consider a tree structured $\compgraph$ such
that all the leaf vertices represent the sources of data and the root
acts as the sink which wants to know the final function value. Recall
that we label all the vertices in $\compnodes$ as
$\{\omega_1,\ldots,\omega_p\}.$ Also $\prenodes{\omega_i}$ and
  $\sucnodes{\omega_i}$ represent the predecessor and successor
  vertices of the vertex $\omega_i \in \compnodes.$ It is easy to
verify that in this type of tree structured computation graph, every
vertex (except the root) has only one successor vertex, i.e., for any
$\omega \in \compnodes$ (except the root) the set $\sucnodes{\omega}$
is a singleton set. The set $\sucnodes{\omega_p}$ is null, where
$\omega_p$ is the root and there is a unique path from each source to
the sink.

As we have mentioned earlier, \cite{Ying08} and \cite{Shah13} adapt,
respectively, the Bellman-Ford and the Dijkstra shortest path
algorithms to solve \mincost\ when $\compgraph$ is a tree. In the
following we will describe Algorithm~\ref{algo:tree} that solves
\mindelay\ when $\compgraph$ is a tree.

\textbf{Algorithm overview:} Algorithm~\ref{algo:tree} is a
centralized algorithm which assumes knowledge of the all-pair shortest
path delay matrix $\mathtt{D}$ of the communication network $\net.$
The optimal embedding is computed by iterating through all the edges
of the computation graph $\compgraph.$ For an edge
$(\omega_i,\omega_j)$ of $\compgraph,$ the algorithm computes the
optimal delay of the path leading to the vertex $\omega_j$ from
sources via $\omega_i$ for all possible mappings of the vertex
$\omega_j$ in the network. The delay till any vertex $\omega_j$ is the
maximum of the optimal delays of all the paths reaching to $\omega_j$
plus the processing delay at that vertex. Once the optimal delay for
the sink node is computed, the algorithm backtracks to find the
optimal mapping of all the other vertices.

\textbf{Algorithm Description:} In each iteration the
Algorithm~\ref{algo:tree} maintains the following data structures: 
\begin{algorithm}
  \caption{Optimal embedding algorithm to solve \mindelay\ for tree
    graphs}
  \label{algo:tree}
 
  \begin{algorithmic} [1]
    \REQUIRE{Network graph $\net =(\netnodes,\netedges)$, $|\netnodes| = n,$ $|\netedges|= m,$ Weight function $T:\netedges \mapsto \mathbb{R}^{+}$, Tree computation graph $\compgraph = (\compnodes, \compedges ),$ $|\compnodes|= p,$ $|\compedges|  =p-1,$ Weight function $\edgewt:\compedges \mapsto \mathbb{R}^{+},$ Cost function $\processingwt:\compnodes \times \netnodes \mapsto \mathbb{R}^{+}.$} 
  
    \ENSURE{Embedding $\embedding$ with minimum delay under $T,\edgewt,$ and $\processingwt.$} 
    \STATE $[[\mathtt{D} =\distance_{uv}]]$ \COMMENT{$n \times n$ distance matrix for $\net.$}
    \BlankLine
    \COMMENT{Initialization of tables}
    \STATE $h_l(v) := 0 $ $\ \ \forall v \in \netnodes, l \in [1,p-1]$ \;
    \STATE $f_l(u,v) := 0 $ $\ \ \forall u,v \in \netnodes, l \in [1,p-1]$ \; 
    \FOR{$l=1$ \KwTo $K$}  
    \FORALL{$v \in \netnodes$}
    \STATE $f_l(s_l,v) := \edgewt(\omega_{l},\sucnodes{\omega_l})\distance_{uv}$ \;
    \STATE $h_l(v) \leftarrow f_l(s_l,v)$ \;
    \STATE $x_l(v) \leftarrow s_l$ \;
    \ENDFOR
    \ENDFOR
    \FOR{$l=K+1$ \KwTo $(p-1)$} 
    \FORALL{$v \in \netnodes$}
    \FORALL{$u \in \netnodes$}
    \STATE $f_l(u,v) :=  \processingwt(\omega_{l},u) + \edgewt(\omega_{l},\sucnodes{\omega_l})\distance_{uv}$ \;
    \ENDFOR
    \STATE $h_l(v) \leftarrow \min\limits_{u} \left\{\max[h_i(u)|\omega_i \in \prenodes{\omega_l}] + f_l(u,v)\right\}$ \;
    \STATE $x_l(v) \leftarrow \argmin\limits_{u} \left\{\max[h_i(u)|\omega_i \in \prenodes {\omega_l}] + f_l(u,v)\right\}$ \;
    \ENDFOR
    \ENDFOR
    \STATE $d(\embedding):= \max[h_{i}(t)|\omega_i \in \prenodes{\omega_p}] + \processingwt(\omega_p,t) $ \;
    \STATE $\embedding(\omega_p) = t $ \;
    \COMMENT{Backtracking}
    \FOR{$l=(p-1)$ \KwTo $1$}
    \STATE $\embedding(\omega_{l}) = x_l(\embedding(\sucnodes{\omega_l})) $ \;
    \ENDFOR
  \end{algorithmic}
\end{algorithm}

\begin{enumerate}
\item $f_l(u,v):$ It is the delay associated with edge
  $(\omega_{l},\sucnodes{\omega_l})$ and vertex $\omega_{l}$ when
  $\omega_{l}$ and $\sucnodes{\omega_l}$ are mapped to vertex $u$ and
  $v,$ respectively.
\item $h_l(v):$ It is the optimal delay of the path leading to vertex
  $\sucnodes{\omega_l}$ (via $\omega_l)$ when it is mapped to $v \in
  \net.$ The algorithm also stores the mapping of vertex $\omega_{l}$
  in $x_l(v)$ corresponding to this value.
\end{enumerate}

After initializing these data structures to zero (lines $2,3$) the
algorithm completes in the following two steps.

%
\textbf{Lines $4$--$10:$} This is the iteration over all the sources in
  $\compgraph.$ As the mapping of source $\omega_i \in \compgraph$ is
  fixed to source $s_i \in \net,$ here we just calculate the minimum
  delay required to reach to all the vertices from the source $s_i.$
  Note that the processing delay at the source is zero, i.e.,
  $\processingwt(\omega_i,s_i) =0.$
  
\textbf{Lines $11$--$19:$} This is the main loop of the algorithm which
  runs over all the remaining vertices of $\compgraph$ starting from
  $\omega_{K+1}$ to $\omega_{p-1}.$ In each iteration $f_l(u,v)$ is
  updated for all possible mappings of vertices
  $\omega_{l},\sucnodes{\omega_{l}}$ (lines $13$--$15)$. The function
  $f_l(u,v)$ is computed by adding the following delay terms.
  \begin{itemize}
  \item $\processingwt(\omega_{l},u):$ This is the processing delay
    associated with the vertex $\omega_{l}$ when it is performed at
    vertex $u \in\netnodes.$
  \item $\edgewt(\omega_{l},\sucnodes{\omega_{l}})\distance_{uv}:$
    This is the communication delay associated with the edge
    $(\omega_{l},\sucnodes{\omega_{l}})$ when mapped to $u$ and $v$
    respectively.
 \end{itemize}
 Then $h_l(v)$ is updated in line $16.$ Note that
 $\max[h_i(u)|\omega_i \in \prenodes{\omega_l}]$ is the minimum delay
 till the vertex $\omega_{l}$ when it is mapped to $u.$ This is
 equivalent to the first term in right hand side of
 ~\eqref{eq:nodedelay} (Section~\ref{sec:prelims}, Page 3). This along
 with the processing delay $\processingwt(\omega_{l},u)$ gives the
 delay of the vertex $\omega_{l}$ when mapped to $u.$ As mentioned
 earlier the algorithm also stores the mapping $u$ in $x_l(v)$ which
 minimizes the $h_l(v).$
 
\textbf{Lines $20$--$24:$} Once all the delays are computed the algorithm
  computes the delay at the sink vertex $t$ (as the mapping of
  $\omega_p$ is fixed to $t$ in embedding $\embedding)$ and finds the
  mapping of vertices of $\compgraph$ on $\net$ which gives this value
  by backtracking from the sink to the sources.

\subsubsection{Analysis of Algorithm~\ref{algo:tree}}

\begin{theorem}
  \label{thm:mindelay_tree}
  Algorithm~\ref{algo:tree} solves \mindelay\ when $\compgraph$ is a
  tree and runs in $O(pn^2)$ time. Recall that $p$ and $n$ are the
  number of vertices in $\compgraph$ and $\net$ respectively.
\end{theorem}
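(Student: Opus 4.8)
The plan is to prove two claims: that the table entry $h_l(v)$ computed by the algorithm equals a precisely-defined optimal partial delay, and that the overall running time is $O(pn^2)$. Throughout I assume the vertices $\omega_1,\ldots,\omega_{p-1}$ are labelled in a topological order of $\compgraph$, so that when the main loop reaches $\omega_l$ every predecessor in $\prenodes{\omega_l}$ has already been processed; such an ordering exists because $\compgraph$ is a DAG and it is consistent with the source-first labelling used in lines $4$--$10$.

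For correctness I first fix, for each non-root vertex $\omega_l$ and each network node $v$, the quantity $h_l^{*}(v)$ defined as the minimum, over all partial embeddings of the sub-tree of $\compgraph$ feeding into $\omega_l$ (with the source constraints of Definition~\ref{def:embedding} enforced), of the delay with which the result of $\omega_l$ reaches the node $v$ hosting $\sucnodes{\omega_l}$; concretely $h_l^{*}(v)=\min_u\big(g_l(u)+\edgewt(\omega_l,\sucnodes{\omega_l})\distance_{uv}\big)$, where $g_l(u)$ is the optimal delay at which $\omega_l$'s output is available at $u$. The claim, proved by induction on the topological order, is that $h_l(v)=h_l^{*}(v)$ for all $l,v$. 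The base case is the source loop (lines $4$--$10$): since $\embedding(\omega_l)=s_l$ is forced and the source processing delay is zero, $h_l^{*}(v)=\edgewt(\omega_l,\sucnodes{\omega_l})\distance_{s_l v}$, exactly what lines $6$--$7$ store.

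The inductive step is the heart of the argument and the step I expect to be the main obstacle. For an internal vertex $\omega_l$ mapped to $u$, the true optimal delay of its output is
\begin{equation*}
g_l(u)=\min\Big(\max_{\omega_i\in\prenodes{\omega_l}}\big[g_i(u_i)+\edgewt(\omega_i,\omega_l)\distance_{u_i u}\big]\Big)+\processingwt(\omega_l,u),
\end{equation*}
where the minimum is over the joint choice of mappings $u_i$ and sub-tree embeddings of all predecessors. The crux is that, because $\compgraph$ is a tree, the sub-trees feeding distinct predecessors are vertex-disjoint, so the bracketed terms depend on disjoint sets of embedding variables. I would prove the elementary lemma that when the terms depend on independent variables one may interchange the outer minimisation with the maximum, i.e.\ $\min\max_i a_i(x_i)=\max_i\min_{x_i}a_i(x_i)$ (the right side is achievable by minimising every term simultaneously, and is a lower bound because each term is individually bounded below by its own minimum). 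Applying this with $x_i$ ranging over the host $u_i$ and the sub-tree embedding of $\omega_i$ gives $g_l(u)=\max_{\omega_i\in\prenodes{\omega_l}}h_i^{*}(u)+\processingwt(\omega_l,u)$, and by the induction hypothesis $h_i^{*}(u)=h_i(u)$. Substituting into the definition of $h_l^{*}$ yields $h_l^{*}(v)=\min_u\big(\max_i h_i(u)+f_l(u,v)\big)$ with $f_l(u,v)=\processingwt(\omega_l,u)+\edgewt(\omega_l,\sucnodes{\omega_l})\distance_{uv}$, which is precisely the update of lines $13$--$16$, while $x_l(v)$ records the optimal host. The final value in line $20$ is then $\max_{\omega_i\in\prenodes{\omega_p}}h_i(t)+\processingwt(\omega_p,t)=g_p(t)$, which matches \eqref{eq:nodedelay} and equals $d(\embedding_{opt}^{d})$. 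A short separate argument shows the backtracking of lines $21$--$24$ reconstructs an embedding attaining this value: starting from $\embedding(\omega_p)=t$ and setting $\embedding(\omega_l)=x_l(\embedding(\sucnodes{\omega_l}))$ picks, at each vertex, the host realising the minimum that defined $h_l$ for its successor, so the reconstructed mapping is globally consistent and achieves $d(\embedding_{opt}^{d})$.

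For the running time I account for the loops directly. The distance matrix (line $1$) and the zero-initialisation (lines $2$--$3$) cost $O(pn^2)$, and the source loop costs $O(Kn)$. In the main loop, for each of the $O(p)$ vertices I precompute $\max[h_i(u)\mid\omega_i\in\prenodes{\omega_l}]$ for all $u$; since $\sum_l|\prenodes{\omega_l}|=q=p-1$, this costs $O(pn)$ in total. Computing $f_l(u,v)$ for all $u,v$ and then performing the minimisation of line $16$ for all $v$ costs $O(n^2)$ per vertex, hence $O(pn^2)$ overall, which dominates; the backtracking is $O(p)$. Thus the algorithm runs in $O(pn^2)$ time, completing the proof.
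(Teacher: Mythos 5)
Your proof is correct and follows essentially the same dynamic-programming induction as the paper: the invariant that $h_l(v)$ equals the optimal delay of the embedded path reaching $\sucnodes{\omega_l}$ at $v$ via $\omega_l$, established by induction from the sources and matched against lines $14$ and $16$. The one substantive difference is that you isolate and prove the $\min\max=\max\min$ interchange over the vertex-disjoint predecessor sub-trees --- the step the paper performs implicitly when it substitutes $d(\tilde{x}_l)$ into $g_l(\tilde{x}_j)$ and replaces the inner terms by $\max_i g_i(\tilde{x}_l)$ --- which is a worthwhile clarification since it is precisely where the tree structure of $\compgraph$ is used.
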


\begin{proof}
  We give the proof of correctness of Algorithm~\ref{algo:tree} only
  when the computation graph is unweighed. The proof can easily be
  extended to the case when there are weights on the edges of
  $\compgraph.$ Recall that the delay of an embedding is defined
  recursively over all the vertices of $\compgraph$ starting from the
  sink vertex. It is sufficient to prove that at any iteration $l$ the
  algorithm computes the optimal delay of embedding the path from any
  source $\omega_i$ to an intermediate vertex $\sucnodes{\omega_l}$
  via $\omega_l$ for all possible embeddings of $\sucnodes{\omega_l}.$
  And then at the end it chooses the fixed mapping of the sink
  $\omega_p$ and traces back the optimal paths from sink to all the
  sources via the intermediate vertices. We will prove this
  inductively.

  Let $\tilde{x}_i$ be the assignment of $\omega_i$ in $\net$ and at
  any iteration $\sucnodes{\omega_l} = \omega_j$ for some $j \in
  [1,p].$ The optimal path from any source $\omega_i$ for $i \in
  [1,K]$ to its successor vertex $\sucnodes{\omega_i} = \omega_j$ is
  just the shortest path distance between $s_i$ and the vertex to
  which $\sucnodes{\omega_i}$ will eventually be mapped. This is equal
  to $\distance_{s_i \tilde{x}_j}.$ It is easy to verify that in the
  algorithm (line $11)$ this value is stored in $h_l(v) \ \ \forall l
  \in [1,K]$ data structure for all $v \in \netnodes.$ Assuming that
  the optimal delays till the $(l-1)^{th}$ run are calculated by the
  algorithm and stored in $h_i$s we will show that at $l^{th}$ run the
  algorithm computes the optimal delay. The optimal delay of the path
  from sources to $\sucnodes{\omega_l}$ via $\omega_l$ is given by
  $g_l(\tilde{x}_j) := \min_{\tilde{x}_l} \left\{d(\tilde{x}_l) +
      \distance_{\tilde{x}_l \tilde{x}_j} \right\},$ 
  where, $d(\tilde{x}_l)$ is the optimal delay of the path till
  $\omega_l.$ The optimal delay $d(\tilde{x}_l)$ can further be
  expanded and written in terms of the delay of its predecessors as:
  $d(\tilde{x}_l) := \min_{\tilde{x}_l} \left\{\max\limits_{\omega_i
        \in \prenodes{\omega_l}}[ d(\tilde{x}_i) + \distance_{\tilde{x}_i
        \tilde{x}_l} ]+ \processingwt(\omega_l,\tilde{x}_l) 
    \right\}.$
 Substituting value of $d(\tilde{x}_l)$ in $g_l(\tilde{x}_j)$ we get
  \begin{align}
    g_l(\tilde{x}_j) = \min_{\tilde{x}_l} \left\{
      \max\limits_{\omega_i \in \prenodes{\omega_l}}g_i(\tilde{x}_l) +
      \processingwt(\omega_l,\tilde{x}_l) + \distance_{\tilde{x}_l
        \tilde{x}_j} \right\}. \label{eq:gfinal}
  \end{align}
  Recall the line $14$ of Algorithm~\ref{algo:tree} which computes
  $f_l(u,v) = \processingwt(\omega_l,u) + \distance_{uv}.$ This is
  nothing but the last two terms of the right hand side of
  \eqref{eq:gfinal} when $\tilde{x}_l = u$ and $\tilde{x}_j = v.$ Now
  observe the line $16$ of Algorithm~\ref{algo:tree} which computes
  $h_l(v) = \min\limits_{u} \left\{\max[h_i(u)|\omega_i \in
    \prenodes{\omega_l}] + f_l(u,v)\right\},$ where $h_i(u)$ is the
  optimal delay of the path leading to $\omega_l$ via its predecessor
  $\omega_i.$ The optimal delay of the path $h_i(u) =
  g_i(\tilde{x}_l)$ for $\tilde{x}_l=u \in \netnodes.$ Hence
  Algorithm~\ref{algo:tree} indeed computes the optimal delay of the
  path leading to $\sucnodes{\omega_l}$ via $\omega_l$ for all
  possible mappings $v \in \netnodes$ of $\sucnodes{\omega_l}$ and
  stores it in $h_l(v)$ at iteration $l.$

  Recall that $\compgraph$ is a tree thus the total number of edges in
  it are $p-1$ and Algorithm~\ref{algo:tree} is executed once for each
  edge. In each iteration it computes the delay of an edge for all
  possible mappings of its end points which requires $n^2$ (line $14)$
  time where $n$ is the number of vertices in $\net.$ Then it adds the
  delay to the delay of its predecessors and chooses the one with
  minimum value ($h_{l}(v)$ in line $16)$ which requires $O(n)$
  time. Hence the time to complete one iteration is $O(n^2+n) =
  O(n^2).$ The total time complexity of Algorithm~\ref{algo:tree} is
  $O(pn^2).$
\end{proof}

\section{$\compgraph$ is a Layered Graph}
\label{sec:layered-graph}

We now consider the case when $\compgraph$ is a layered graph.  An
example of a layered computation graph is shown in
Fig.~\ref{fig:layered}. We assume that there are $r$ layers and number
of vertices in each layer is at most $k.$ The vertices at layer $l$
are labelled $\omega_{l1},\omega_{l2},\ldots,\omega_{lk}.$ The
directed graph has edge $(\omega_{ai},\omega_{bj})$ only if $b=a
\mbox{ or } a+1.$ Here we also assume that all the sources are at
layer one and there is only one sink $t$ on the last layer.
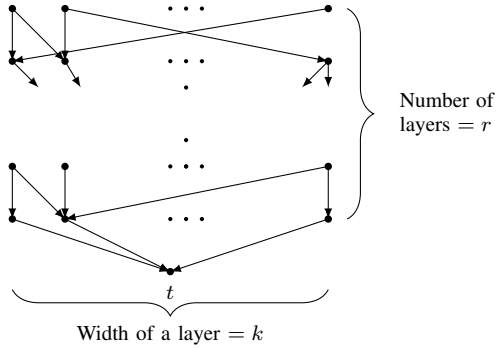
\begin{figure}[tbp]
  \centering
\begin{tikzpicture}[>=latex,scale=0.7]
\footnotesize
\foreach \x in {-3,-2,3} {
  \foreach \y in {-1,0,2,3} {
    \fill (\x,\y) circle (0.07cm);
  }  
}    
\foreach \x in {0,0.3,0.6} {
  \foreach \y in {-1,0,2,3} {
    \fill (\x,\y) circle (0.04cm);
  }  
} 
\foreach \y in {0,0.5,1.5}
  \fill (0.3,\y) circle (0.04cm);
\fill (0,-2) circle (0.07cm);

\draw[->] (-3,3) -- (-3,2);
\draw[->] (-3,3) -- (-2,2);  
\draw[->] (-2,3) -- (-2,2);     
\draw[->] (3,3) -- (-3,2);
\draw[->] (-2,3) -- (3,2);
\draw[->] (-3,2) -- (-2.5,1.5);
\draw[->] (-2,2) -- (-1.7,1.5);
\draw[->] (3,2) -- (2.5,1.5);
\draw[->] (3,2) -- (3,1.5);
\draw[->] (-3,0) -- (-3,-1);
\draw[->] (-3,0) -- (-2,-1);  
\draw[->] (-2,0) -- (-2,-1);     
\draw[->] (3,0) -- (-2,-1);
\draw[->] (3,0) -- (3,-1);
\draw[->] (-3,-1) -- (0,-2);
\draw[->] (-2,-1) -- (0,-2);
\draw[->] (3,-1) -- (0,-2);
\node at (0,-2.4){$t$};    
\draw [decorate,decoration={brace,amplitude=10pt},xshift=-4pt,yshift=0pt]
(3.5,3) -- (3.5,-1) node [black,midway,xshift=0.6cm,text width=2cm,right] {Number of layers $= r$};
\draw [decorate,decoration={brace,amplitude=10pt},xshift=0pt,yshift=-4pt]
(3,-2.2) -- (-3,-2.2) node [black,midway,yshift=-0.6cm] {Width of a layer $= k$};
\end{tikzpicture}
  \caption{Layered computation graph}
  \label{fig:layered}
\end{figure}  

We derive our motivation for these kind of computation graphs from the
MapReduce application framework. In MapReduce framework each user
comes to a network of processors with a set of Map and Reduce
tasks. There is a precedence order between Map and Reduce tasks. Each
Reduce task cannot be started unless the processing of corresponding
set of Map tasks is finished. Each task takes predefined time to
finish and the outputs of the Map tasks are used by the corresponding
Reduce tasks. This dependency can be represented by a directed graph
with edges showing the dependency between the two tasks. The aim in
this setting is to embed the Map and Reduce tasks on the processors
such that the total time of computation and communication is
minimized. We explain our motivation with the following example from
\cite{Dean04}.

\begin{example}
  \label{ex:mapreduce}
  Consider a typical database query by a server (call it sink) to
  check the number of occurrences of different words in two large
  files which are available at two separate servers. The
  task of calculating the number of occurrences of each word inside
  the files can be divided into the following sub-tasks.
  \begin{itemize}
  \item \textit{Splitting:} First the files are split into smaller
    sub-files such that each sub-file can be processed by a processor
    in the network.
  \item \textit{Mapping:} Each sub-file is then parsed to get the
    number of times each word occurred in it.
  \item \textit{Shuffling and reducing:} Once the count from each
    sub-file is available then the counts of one word are transported
    to one processor to compute the final count of that word. Each
    processor adds all the individual counts and results the final
    count of each word.
   \item \textit{Final result:} Finally the result is transported to
    the node which asked this query.
  \end{itemize}
  Fig.~\ref{fig:mapreduce} represents a typical MapReduce data flow
  diagram for this problem. The aim is to determine the processors for
  each of the sub-tasks such that the time to answer the query at the
  sink is minimized. The whole process can be represented by a
  directed layered graph with each layer representing one sub-task and
  a vertex at a layer representing a particular sub-task. Observe that
  the edges in this graph are only between the consecutive layers and
  the operations at a vertex cannot start until the data from all its
  predecessors is available.
  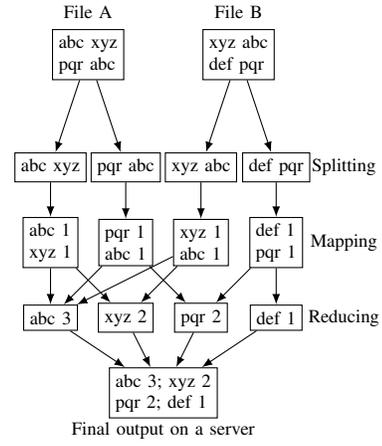
\begin{figure}[tbp]
    \centering
\begin{tikzpicture}[>=latex]
  \scriptsize
  \tikzstyle{ibox}=[draw=black,very thick,shape=rectangle,rounded corners=0.5em,inner sep=4pt,minimum height=2em,text badly centered];
  \node[draw,align=left](a) at (-1,2) {abc xyz \\ pqr abc};
  \node[draw,align=left](b) at (1,2) {xyz abc \\ def pqr};
  \node at (-1,2.4) [draw=none,above] {File A};
  \node at (1,2.4) [draw=none,above] {File B};  
  \node[draw,align=left](a1) at (-1.5,0.5) {abc xyz};
  \node[draw,align=left](a2) at (-0.5,0.5) {pqr abc};
  \node[draw,align=left](b1) at (0.5,0.5) {xyz abc};
  \node[draw,align=left](b2) at (1.5,0.5) {def pqr};
  \node at (2.4,0.5) [draw=none] {Splitting};
  \draw[->] (a) -- (a1);
  \draw[->] (a) --(a2);
  \draw[->] (b) -- (b1);  
  \draw[->] (b) -- (b2);
  \node[draw,align=left](a11) at (-1.5,-0.5) {abc 1 \\ xyz 1};
  \node[draw,align=left](a21) at (-0.5,-0.5) {pqr 1\\ abc 1};
  \node[draw,align=left](b11) at (0.5,-0.5) {xyz 1\\ abc 1};
  \node[draw,align=left](b21) at (1.5,-0.5) {def 1\\ pqr 1};
  \node at (2.4,-0.5) [draw=none] {Mapping};
  \draw[->] (a1) -- (a11);
  \draw[->] (a2) --(a21);
  \draw[->] (b1) -- (b11);  
  \draw[->] (b2) -- (b21);
  \node[draw,align=left](a12) at (-1.5,-1.5) {abc 3};
  \node[draw,align=left](a22) at (-0.5,-1.5) {xyz 2};
  \node[draw,align=left](b12) at (0.5,-1.5) {pqr 2};
  \node[draw,align=left](b22) at (1.5,-1.5) {def 1};
  \node at (2.4,-1.5) [draw=none] {Reducing};
  \draw[->] (a11) -- (a12);
  \draw[->] (a11)--(a22);
  \draw[->] (a21) -- (a12);
  \draw[->] (a21) -- (b12);
  \draw[->] (b11) -- (a12);
  \draw[->] (b11) -- (a22);
  \draw[->] (b21) -- (b12);
  \draw[->] (b21) -- (b22);
  \node[draw,align=left] (t) at (0,-2.5) {abc 3; xyz 2\\ pqr 2; def 1};
  \node at (0,-3) [draw=none,align=center] {Final output on a server};
   \draw[->] (a12) -- (t);
  \draw[->] (a22) --(t);
  \draw[->] (b12) -- (t); 
  \draw[->] (b22) -- (t);
\end{tikzpicture}
   \caption{Sub-tasks and data flow diagram of a typical database query
      in MapReduce framework}
    \label{fig:mapreduce}
  \end{figure}  
\end{example}

\begin{algorithm}\caption{Optimal embedding algorithm for layered graphs}
\label{algo:layer}
 
  \begin{algorithmic} [1]
    \REQUIRE{Network graph $\net =(\netnodes,\netedges)$, $|\netnodes| = n,$ $|\netedges|= m,$ Weight function $T:\netedges \mapsto \mathbb{R}^{+}$, Layered computation graph $\compgraph = (\compnodes, \compedges ),$ $|\compnodes|= p,$ $|\compedges|  =q,$ Weight function $\edgewt:\compedges \mapsto \mathbb{R}^{+},$ Cost function $\processingwt:\compnodes \times \netnodes \mapsto \mathbb{R}^{+}.$}  
  
    \ENSURE{Embedding $\embedding$ with minimum cost} 
    \STATE $[[\mathtt{D} =\distance_{u_i,u_j}]]$ \COMMENT{$n \times n$ distance matrix for $\net.$}
    \STATE $X := \{X_i: X_i \subseteq \netnodes$ and $|X_i| = k \}$  \COMMENT{$|X| = n^k,$ $X_i =\{a_1,a_2,\ldots ,a_k\}$}
    \STATE $Y := \{Y_i: Y_i \subseteq \netnodes$ and $|Y_i| = k \}$  \COMMENT{$|Y| = n^{k},$ $Y_i =\{b_1,b_2,\ldots ,b_k\}$}
    \STATE $Z_i := [z_{i1},z_{i2},\ldots,z_{ik}]$ \COMMENT{$z_{ij} \in \netnodes,$ is the assignment of node $\omega_{ij} \in \compnodes$ in optimal embedding}
    \BlankLine
    \COMMENT{Initialization of tables}
    \STATE $h_l(X_i) := 0 $ $ \ \ \forall X_i \in X, l \in [1,r]$ \;
    \STATE $f_l(X_i,Y_j) := 0 $ $\ \ \forall X_i \in X, Y_j \in Y, l \in [1,r]$ \; 
   
  \BlankLine
  \FOR{$l=1$ \KwTo $(r-1)$}
    \FOR{$j=1$ \KwTo $|Y|$}
      \FOR{$i=1$ \KwTo $|X|$}
    
      \STATE $f_l(X_i,Y_j) :=  \sum\limits_{\substack{u=1 \\ a_u \in X_i}}^{|X_i|} \processingwt(\omega_{lu},a_u) +
                        \sum\limits_{\substack{
                                      a_u \in X_i, b_v \in Y_j \\
                                      (\omega_{lu},\omega_{(l+1)v}) \in \compedges}} \edgewt(\omega_{lu},\omega_{(l+1)v}) \distance_{a_ub_v} + 
                        \sum\limits_{\substack{
                                      a_u,b_v \in X_i \\
                                      (\omega_{lu},\omega_{lv}) \in \compedges}} \edgewt(\omega_{lu},\omega_{lv}) \distance_{a_ub_v} +
                        h_{l-1}(X_i)$ \;
      \ENDFOR
      \STATE $h_l(Y_j) \leftarrow \min\limits_{X_i \in X} f_l(X_i,Y_j)$ \;
      \STATE $x_l(Y_j) \leftarrow \argmin\limits_{X_i \in X} f_l(X_i,Y_j)$ \; 
    \ENDFOR
  \ENDFOR
   \FOR{$i=1$ \KwTo $|X|$} 
 \STATE $h_r(X_i) := \sum\limits_{a_u,b_v \in X_i, 
                                      (\omega_{ru},\omega_{rv}) \in \compedges} \distance_{a_ub_v} +
               h_{r-1}(X_i)$ \;
                                      
  \ENDFOR
 \STATE $C(\embedding):= \min\limits_{X_j \in X} h_r(X_j) $ \;
 \STATE $ Z_r = \{z_{r1} , \ldots, z_{rk}\} = \argmin\limits_{X_j \in X} h_r(X_j)$ \; 
 \COMMENT{Backtracking}
  \FOR{$l=r-1$ \KwTo $1$}
     \STATE $Z_l = x_l(Z_{l+1}) $ \;
  \ENDFOR
\end{algorithmic}
\end{algorithm}

Now we present an algorithm which solves the \mincost\ problem for
layered graphs in polynomial time.

\textbf{Algorithm overview:} Like Algorithm~\ref{algo:tree},
Algorithm~\ref{algo:layer} also has two phases---a forward path and a
backward tracking. The forward path is a dynamic program that iterates
over all the layers of the computation graph. In the first iteration
it computes the optimal mappings of all the vertices of layer $1$
corresponding to a possible mapping of the vertices of layer $2.$ If a
vertex is a source vertex then its mapping is always fixed to the
corresponding source vertex in $\net.$ Similarly, in iteration $l$ it
computes the optimal mapping of vertices of layer $l$ each
corresponding to a possible mapping of vertices of layer $l+1.$ It
also computes the optimal cost till layer $l+1$ for every possible
mapping of vertices of layer $l+1.$ Once it reaches the last layer the
algorithm chooses the mapping of the vertices of last layer which
minimizes the overall cost and backtracks to get the corresponding
mappings of all the previous layers. 

\textbf{Algorithm Description:} Algorithm~\ref{algo:layer} iterates
over the layers in $\compgraph$ and in each iteration it maintains the
following two data structures:
\begin{enumerate}
\item $f_l(X_i,Y_j),$ the cost of embedding all the vertices till
  layer $l$ when the vertices at layer $(l+1)$ are placed at $Y_j$ and
  vertices of layer $l$ are placed at $X_i,$ where $X_i,Y_j \subset
  \netnodes$ of size $k.$
\item $h_l(Y_j),$ the optimal cost of embedding all the vertices (and
  corresponding edges) till layer $l$ when the vertices at layer
  $(l+1)$ are mapped to an ordered subset $Y_j \subset \netnodes$ of
  size $k.$
\end{enumerate}

After initializing these data structures to $0$ (in line $5,6$) the
algorithm completes in the following three steps.

%
\textbf{Lines $7$--$15$:} This is the main loop of the algorithm which
  runs for the first layer (from sources) to last but one layer. At
  each layer $l$ the data structure $f_l(X_i,Y_j)$ is updated for all
  possible combinations of $k$ size subsets $X_i$ and $Y_j$ of
  $\netnodes$ (lines $7$--$11$). The following cost terms are added
  together to calculate $f_l(X_i,Y_j)$ along with the optimal cost
  till layer $l-1,$ $h_{l-1}(X_i).$
  \begin{itemize}
  \item $\sum\limits_{\substack{u=1 \\ a_u \in X_i}}^{|X_i|}
    \processingwt(\omega_{lu},a_u):$ Cost of putting computation node
    $\omega_{lu} \in \compnodes$ at $a_u \in \netnodes$ for each node
    at the current layer.
  \item $\sum\limits_{\substack{a_u \in X_i, b_v \in Y_j \\
        (\omega_{lu},\omega_{(l+1)v}) \in \compedges}}
    \edgewt(\omega_{lu},\omega_{(l+1)v}) \distance_{a_ub_v}:$ Total
    communication cost, when node $\omega_{lu} \in \compnodes$ is
    placed at node $a_u \in \netnodes$ and $\omega_{(l+1)v} \in
    \compnodes$ is placed at $b_v \in \netnodes,$ is the
    multiplication of corresponding costs in computation graph (weight
    $\edgewt$) and communication graph (weight $T$). This term
    captures the cost for all the edges between layer $l$ and layer
    $l+1.$
  \item $\sum\limits_{\substack{ a_u,b_v \in X_i \\
        (\omega_{lu},\omega_{lv}) \in \compedges}}
    \edgewt(\omega_{lu},\omega_{lv}) \distance_{a_ub_v}:$ Total
    communication cost, when node $\omega_{lu} \in \compnodes$ is
    placed at node $a_u \in \netnodes$ and $\omega_{lv} \in
    \compnodes$ is placed at $b_v \in \netnodes,$ is the
    multiplication of corresponding costs in computation graph (weight
    $\edgewt$) and communication graph (weight $T$). This term
    captures the cost for all the edges at layer $l.$
  \end{itemize}
  %
  
\textbf{Lines $16$--$20$:} Here the algorithm finally computes the total
  cost of the embedding the graph $\compgraph$ by adding the cost of
  the edges between the vertices of last layer $r,$ if any, when the
  vertices of last layer are placed at $X_i.$ And computes the optimal
  cost of embedding $\embedding,$ $C(\embedding),$ by choosing the
  placement of last layer which minimizes the overall cost (line
  $19$--$20$). The vector $Z_r$ stores the mapping of vertices at
  layer $r$ under the embedding $\embedding.$
  
\textbf{Lines $21$--$23$:} After finding the optimal mapping for the vertices
  at layer $r$ the algorithm traces back the corresponding optimal
  mapping for vertices at layer $r-1$ all the way upto the first
  layer.

To simplify the description of the algorithm we do not show the fixed
mapping of sources and sink into the network graph in
Algorithm~\ref{algo:layer}. It is easy to see that to incorporate the
fixed mapping of sources and sink the algorithm needs to only consider
those subsets $X_i,Y_j$ of $\netnodes$ which map the source $\omega_i$
(sink $\nu$) to the corresponding source $s_i$ (sink $t$) while
calculating the data structures of the layer on which $\omega_i$ is
present (layer $r$).

\subsection{Analysis of Algorithm \ref{algo:layer}}

\begin{theorem}
  \label{thm:mincost_layer}
  Algorithm~\ref{algo:layer} solves \mincost\ and the time complexity
  of the algorithm is $O(rn^{2k})$ when $\compgraph$ is a layered
  graph with $r$ layers and at most $k$ nodes per layer.
\end{theorem}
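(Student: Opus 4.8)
The plan is to prove the two assertions of the theorem separately: first the correctness of Algorithm~\ref{algo:layer}, by induction on the layer index, and then the running-time bound by a direct count of the work in each loop. The structural fact that drives everything is that in a layered $\compgraph$ every edge of $\compedges$ is either intra-layer or between two consecutive layers. Hence the cost $C(\embedding)$ of \eqref{eq:embedding-cost} splits into a sum of contributions, one per layer, where the contribution associated with layer $l$ depends only on the placement of the vertices of layers $l$ and $l{+}1$. This decoupling is exactly the optimal-substructure property a layer-by-layer dynamic program needs, and it is the reason the method does not extend to general DAGs, where a long edge would couple non-adjacent layers.

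First I would fix the loop invariant. For a $k$-tuple $Y_j$ of network nodes, $h_l(Y_j)$ is intended to equal the minimum, over all embeddings of the vertices of layers $1,\dots,l$ that respect the source constraints and map layer $l{+}1$ to $Y_j$, of the total cost of: the processing of layers $1,\dots,l$, all intra-layer edges inside layers $1,\dots,l$, and all edges between layer $l$ and layer $l{+}1$. I would prove by induction on $l$ that the algorithm computes exactly this quantity, using the convention $h_0\equiv 0$ for the base case (no cost is incurred before the first layer). For the inductive step I would substitute the hypothesis for $h_{l-1}(X_i)$ into the recursion for $f_l(X_i,Y_j)$ (lines~$7$--$11$) and verify that its three sums add precisely the processing cost of layer $l$ at placement $X_i$, the intra-layer-$l$ edge cost, and the layer-$l$-to-$(l{+}1)$ edge cost; the minimization $h_l(Y_j)\leftarrow\min_{X_i}f_l(X_i,Y_j)$ then extends an optimal partial embedding of the first $l{-}1$ layers to an optimal one of the first $l$ layers, because fixing the placement $X_i$ of layer $l$ renders the cost incurred before layer $l$ independent of everything after it. The verification that each processing term and each edge of $\compedges$ is charged exactly once across the iterations is the routine bookkeeping part, and I would track it explicitly: the processing of layer $l$ and the intra-layer edges of layer $l\le r{-}1$ are charged in $f_l$, each edge between consecutive layers is charged in the iteration of its lower layer, and the remaining intra-layer cost of the last layer is charged in lines~$16$--$20$.

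With the invariant in hand correctness follows: lines~$16$--$20$ add the remaining last-layer cost to $h_{r-1}$ and minimize over the placement of layer $r$ to obtain $C(\embedding)=C(\embedding_{opt}^c)$, and the backtracking in lines~$21$--$23$ recovers the optimal assignment of every layer from the stored $\argmin$ pointers $x_l(\cdot)$. The source and sink constraints are handled, as the remark following the algorithm observes, by restricting the admissible tuples $X_i,Y_j$ to those mapping each $\omega_i$ to $s_i$ and $\omega_p$ to $t$; this only shrinks the feasible set and so leaves the inductive argument intact. For the running time I would note that a layer of width $k$ admits $n^k$ placements (the embedding is many-to-one, so repeats are allowed), whence $|X|=|Y|=n^k$; the triple loop performs $O(r)\cdot|Y|\cdot|X|$ evaluations of $f_l$, each costing $O(k^2)$ to assemble its $O(k)$ processing terms and $O(k^2)$ edge terms, while the $\min$/$\argmin$ step costs $O(|X|)$ per pair $(l,Y_j)$. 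Treating the width $k$ as a constant, the dominant term is $O(r\,n^{k}\,n^{k})=O(r\,n^{2k})$, as claimed.

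I expect the main obstacle to be the precise statement and proof of the loop invariant, in particular pinning down exactly which edges and processing costs are accounted for in $h_l$ versus deferred to a later iteration, so that the telescoping sum over all layers reproduces $C(\embedding)$ with no term double-counted or omitted. The optimal-substructure justification---that fixing a layer's placement decouples past from future---is conceptually the heart of the argument, whereas the complexity count is essentially immediate once $|X|=|Y|=n^k$ is established.
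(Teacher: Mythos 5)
Your proposal is correct and follows essentially the same route as the paper: a layer-by-layer dynamic-programming induction showing that $h_l(Y_j)$ equals the optimal cost of embedding the first $l$ layers conditioned on layer $l{+}1$ being placed at $Y_j$, justified by the fact that edges only join vertices in the same or consecutive layers, followed by the same $O(r\,n^k\cdot n^k)$ count. The only difference is cosmetic: the paper carries out the induction for the simplified unweighted, zero-processing-cost, no-intra-layer-edge case and asserts the extension, whereas you track the full bookkeeping of weights, processing terms, and intra-layer edges directly.
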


\begin{proof}
  We prove the correctness of \ref{algo:layer} when $\compgraph$ is an
  unweighed graph and the processing costs are all zero.  We also
  assume that there are no edges between the vertices of a layer. The
  proof can easily be extended with weight functions $\edgewt$ and $\processingwt.$ It
  is sufficient to show that at each iteration $l$ the algorithm
  computes the optimal cost of embedding the computation graph till
  layer $l$ for all possible embeddings of every node of layer $(l+1)$
  given that the cost computed till layer $(l-1)$ is optimal. Let
  $\tilde{x}_i$ be a vector of size $1 \times k$ whose $l^{th}$
  element represents the assignment of a network node for $\omega_{li}
  \in \compnodes.$ In other words, $\tilde{x}_i$ is a $k$ size subset
  of $\netnodes.$ Let us define,
  \begin{equation}
    d(\tilde{x}_i,\tilde{x}_{i+1}) := \sum\limits_{\substack{a_u \in
        \tilde{x}_i, b_v \in \tilde{x}_{i+1}
        \\ (\omega_{iu},\omega_{(i+1)v}) \in \compedges}}
    \distance_{a_ub_v}. \label{dist}
  \end{equation}
  This represents the sum of distance between all the adjacent nodes
  in layer $i$ and $(i+1)$ when the nodes of layer $i$ are embedded to
  $\tilde{x}_i$ and nodes of layer $(i+1)$ are embedded to
  $\tilde{x}_{i+1}.$ Total cost of any embedding can then be written
  as $C := \sum\limits_{ i=1}^{r-1} d(\tilde{x}_i,\tilde{x}_{i+1}). $
  To obtain the optimal embedding we have to minimize the above
  equation with respect to all the possible mappings $\tilde{x}_i.$
  Therefore the optimal cost can be written as $C_{opt} = \min_{\tilde{x}_1,\ldots,\tilde{x}_r} \left(\sum\limits_{i=1}^{r-1} d(\tilde{x}_i,\tilde{x}_{i+1})\right).$
  Separating the terms with $\tilde{x}_1$ and some algebraic
  manipulation will give us
  \begin{eqnarray}
    C_{opt} &=& \min_{\tilde{x}_2} g_1(\tilde{x}_2) + 
    \min_{\tilde{x}_2,\ldots,\tilde{x}_r} \left(\sum\limits_{
      i=2}^{r-1} d(\tilde{x}_i,\tilde{x}_{i+1})\right), \label{g_1}  
  \end{eqnarray}     
  where, $g_1(\tilde{x}_2) = \min_{\tilde{x}_1}
  d(\tilde{x}_1,\tilde{x}_2).$ Similarly after minimizing with respect 
  to $\tilde{x}_l$ we can write the cost as,
  \begin{eqnarray}
    C_{opt} &=&  \min_{\tilde{x}_{l+1},\ldots,\tilde{x}_r}
    \left(g_l(\tilde{x}_{l+1}) + \sum\limits_{ i=l+1}^{r-1}
    d(\tilde{x}_i,\tilde{x}_{i+1})\right), \label{g_l} 
  \end{eqnarray}
  where $g_l(\tilde{x}_{l+1}) = \min\limits_{\tilde{x}_l}
  g_{l-1}(\tilde{x}_l).$ Now it suffices to show that the algorithm
  indeed calculates $g_l$ at $l^{th}$ iteration. Recall that
  $\tilde{x}_l$ and $\tilde{x}_{l+1}$ are $k$ size subsets of
  $\netnodes$ which represent the mapping of nodes of layer $l$ and
  $l+1$ respectively. In the algorithm, $X_i$ represents a $k$ size
  subset of $\netnodes$ to which the nodes of the layer of the current
  iteration are mapped. In other words, $X_i$ is the same as
  $\tilde{x}_l$ of the above discussion. Similarly $Y_j$ is the same
  as $\tilde{x}_{l+1}.$ Note that in the first iteration the algorithm
  calculates $f_1$ and $h_1$ as follows:
  \begin{eqnarray}
    f_1(X_i,Y_j) &=& \sum\limits_{\substack{a_u \in X_i, b_v \in Y_j \\
        (\omega_{1u},\omega_{2v}) \in \compedges}} \distance_{a_ub_v} +
    h_0(X_i),  \label{f_1}  
  \end{eqnarray}
  for all $k$ size subsets $X_i$ and $Y_j$ of $\netnodes.$ As
  $h_0(X_i)$ is initialized to zero for all $X_i$ using \eqref{dist}
  we can write \eqref{f_1} as  $f_1(X_i,Y_j) = d(X_i,Y_j) \; \ \ \forall X_i \in X, Y_j \in Y.$
  Finally, $h_1$ is calculated by minimizing $f_1$ over $X_i.$
  \begin{eqnarray}
    h_1(Y_j) = \min\limits_{X_i \in X} f_1(X_i,Y_j) = \min\limits_{X_i
    \in X} d(X_i,Y_j). \label{h_1}
  \end{eqnarray} 
  By comparing \eqref{g_1} and \eqref{h_1} we get $h_1(Y_j) =
  g_1(\tilde{x}_2),$ when $Y_j =\tilde{x}_2.$ The algorithm maintains
  a table of $h_1$ and the value of $X_i$ for which $f_1(X_i,Y_j)$ is
  minimized for all the $k$ size subsets $Y_j.$ This table is
  equivalent to storing the value of $g_1(\tilde{x}_2)$ for all
  possible values of $\tilde{x}_2.$ Similarly at $l^{th}$ iteration
  the algorithm computes the following two terms:
  \begin{eqnarray*}
    f_l(X_i,Y_j) &=& d(X_i,Y_j) + h_{l-1}(X_i) \; \ \ \forall X_i \in X, Y_j \in Y \\
    h_l(Y_j) &=& \min\limits_{X_i \in X} f_l(X_i,Y_j) \; \ \ \forall Y_j \in Y.
  \end{eqnarray*}  
  The algorithm stores the table of $h_l$ and corresponding $X_i$ for
  each $Y_j.$ As $h_l(Y_j) = g_l(\tilde{x}_{l+1}),$
  the algorithm exactly calculates $g_l$ at each iteration and
  maintains a table for all possible embeddings for nodes at layer $l$
  for each embedding of nodes at layer $l+1.$ This is the same as
  minimizing with respect to one $\tilde{x}$ at a time as explained in
  \eqref{g_l}. The computation of tables for $h_l$ only depends on the
  local variables, i.e., it only depends on the edges between layer
  $l$ and $l+1$ and all possible embeddings of nodes of layer $l$ and
  layer $l+1.$

  As there are at most $k$ nodes at each layer and there are $n$
  possible locations where each node of computation graph can be
  placed in the communication graph, the time required to compute
  $f_l$ is $O(n^k \times n^k) = O(n^{2k})$ and the time to compute the
  corresponding $h_l$ is $O(n^k).$ Thus total time to compute the
  table $h_l$ for a layer is $O(n^{2k}+n^{k}).$ There are $r$ layers
  in the computation graph hence the computation of $h_l$ table is
  done at most $r$ times which gives the time complexity of the
  algorithm as $O(r (n^{2k}+n^{k}))=O(rn^{2k}).$
\end{proof}

We now claim that output of Algorithm~\ref{algo:layer} is a $k^2$
approximation of \mindelay. Let the cost obtained from the embedding
$\embedding_{opt}^c$ be $C_{opt}^{c}.$ Once an embedding is given, we
can obtain the delay of the embedding by recursively using
\eqref{eq:nodedelay} to find the delay at the sink. Let the delay of
the embedding $\embedding_{opt}^c$ be $D_{opt}^{c}.$ Note that in
finding the delay of any vertex in the embedding we take the maximum
of the delays coming from all its incoming edges, i.e., if the delays
of the incoming edges are $d_1,d_2,\ldots,d_k$ then the delay at the
vertex is $d = \max(d_1,\ldots,d_k).$ On the other hand while
computing the cost at any vertex we add the costs coming from all its
incoming edges, i.e., $c = d_1+\ldots+d_k.$ Hence at any vertex $d
\leq c.$ This implies that for any embedding $\embedding,$ $d(\embedding) \leq C(\embedding).$
Between any two layers of a bounded width computation graph there
are at most $k^2$ edges and if we assume that the delay on each edge
is same then the cost at any vertex is $c =k^2d.$ With the same logic
one can easily prove that for an embedding $\embedding,$ $C(\embedding) \leq k^2 d(\embedding).$
Thus for the minimum cost embedding $\embedding_{opt}^c$
\begin{equation}
  \frac{C_{opt}^{c}}{k^2} \leq D_{opt}^{c} \leq C_{opt}^{c}. \label{eq:costopt}
\end{equation}
Let $\embedding_{opt}^d$ be the embedding which minimizes the delay of
$\compgraph$ on $\net$ with $D_{opt}^{d}$ and $C_{opt}^{d}$ being its delay and cost respectively. Then we know that
\begin{equation}
  \frac{C_{opt}^{d}}{k^2} \leq D_{opt}^{d} \leq
  C_{opt}^{d}. \label{eq:delayopt}
\end{equation}
As $\embedding_{opt}^d$ minimizes the delay, $D_{opt}^{d} \leq D_{opt}^{c}.$ From \eqref{eq:costopt} $D_{opt}^{d} \leq D_{opt}^{c} \leq C_{opt}^{c}.$
Similarly $C_{opt}^{c} \leq C_{opt}^{d}$ which along with
\eqref{eq:delayopt} gives $\frac{C_{opt}^{c}}{k^2} \leq \frac{C_{opt}^{d}}{k^2} \leq D_{opt}^{d}.$
Finally we get,
\begin{equation}
  \frac{C_{opt}^{c}}{k^2} \leq D_{opt}^{d} \leq C_{opt}^{c}. \label{eq:delaycost}
\end{equation}
This implies that the cost of $\embedding_{opt}^c$ is a $k^2$
approximation of the delay of $\embedding_{opt}^d.$ We have thus shown
the following. 
\begin{theorem}
  Algorithm~\ref{algo:layer} gives an embedding whose delay is the
  $k^2$ approximation of \mindelay\ when $\compgraph$ is a layered
  graph with $r$ layers and has at most $k$ nodes per layer.
\end{theorem}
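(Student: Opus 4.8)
The plan is to reduce the whole statement to a single scale-free two-sided bound relating the cost and the delay of one \emph{fixed} embedding, and then to combine that bound with the cost-optimality of the embedding $\embedding_{opt}^c$ that Algorithm~\ref{algo:layer} returns (which is optimal for \mincost\ by Theorem~\ref{thm:mincost_layer}). Concretely, I would first establish that for every embedding $\embedding$ of a layered graph of width at most $k$,
\begin{equation}
  d(\embedding) \leq C(\embedding) \leq k^2\, d(\embedding). \label{eq:plan-twosided}
\end{equation}
Given \eqref{eq:plan-twosided}, the approximation guarantee falls out of a short chain of inequalities together with the fact that $\embedding_{opt}^c$ minimizes cost. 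The only genuinely non-trivial ingredient is the right half of \eqref{eq:plan-twosided}, so I expect that to be where the work lies.

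For the left inequality $d(\embedding)\le C(\embedding)$ I would induct along a topological order of $\compgraph$. The recursion \eqref{eq:nodedelay} defines $d(\embedding(\omega))$ as the \emph{maximum} over the incoming edges of (predecessor delay $+$ edge cost), plus the processing term, whereas the cost \eqref{eq:embedding-cost} \emph{sums} the analogous edge and processing terms over the entire graph. Since all summands are non-negative, a maximum over any subset is at most the full sum; propagating this through the induction shows that $d(\embedding(\omega))$ is bounded by the total cost accumulated over the ancestors of $\omega$. Specializing to the sink $\omega_p$, whose ancestor set is all of $\compgraph$, gives $d(\embedding)\le C(\embedding)$.

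The right inequality $C(\embedding)\le k^2 d(\embedding)$ is the real crux and the main obstacle. Write $\ell_e$ for the embedded cost of an edge $e$ (the product of its weight and the network distance between the images of its endpoints). The elementary observation is that each layer holds at most $k$ vertices, so each layer-to-layer transition carries at most $k^2$ edges, and (with no intra-layer edges) every source-to-sink path uses exactly one edge per transition; if all the $\ell_e$ were equal, the total cost would be at most $k^2$ times the cost of one critical path, which is $d(\embedding)$. The difficulty is that the $\ell_e$ need not be uniform, so this counting argument must be made robust. I would do so with a fractional path-cover argument: assign a common weight $y_P$ to every source-to-sink path so that each edge is covered to total extent $\sum_{P \ni e} y_P \geq 1$ while $\sum_P y_P = k^2$, which is possible precisely because each transition contributes at most $k^2$ edges. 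Reordering the cover sum then yields
\begin{equation}
  C(\embedding) = \sum_e \ell_e \leq \sum_P y_P \sum_{e \in P} \ell_e \leq d(\embedding) \sum_P y_P = k^2 d(\embedding), \label{eq:plan-cover}
\end{equation}
since each path length $\sum_{e\in P}\ell_e$ is at most the critical-path length $d(\embedding)$. Equivalently, the linear program whose value is $\max_{\embedding} C(\embedding)/d(\embedding)$ has optimum exactly $k^2$, the extremal instance being a transition in which all $k^2$ edges are present with equal cost; this also shows the constant $k^2$ cannot be improved. Processing terms, which I suppressed above following the paper's simplification, only strengthen the left inequality and can be folded into the same induction.

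Finally I would assemble the pieces. Let $\embedding_{opt}^c$ be the output of Algorithm~\ref{algo:layer}, with cost $C_{opt}^c$ and delay $D_{opt}^c$, and let $\embedding_{opt}^d$ be a delay-optimal embedding with delay $D_{opt}^d$ and cost $C_{opt}^d$. Applying the left half of \eqref{eq:plan-twosided} to $\embedding_{opt}^c$ gives $D_{opt}^c \le C_{opt}^c$; cost-optimality (Theorem~\ref{thm:mincost_layer}) gives $C_{opt}^c \le C_{opt}^d$; and the right half of \eqref{eq:plan-twosided} applied to $\embedding_{opt}^d$ gives $C_{opt}^d \le k^2 D_{opt}^d$. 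Chaining these three steps yields $D_{opt}^c \le C_{opt}^c \le C_{opt}^d \le k^2 D_{opt}^d$, i.e.\ the delay of the embedding computed by Algorithm~\ref{algo:layer} is within a factor $k^2$ of the minimum delay, which is exactly the claimed $k^2$-approximation for \mindelay.
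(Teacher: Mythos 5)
Your skeleton is exactly the paper's: both arguments rest on the two-sided bound $d(\embedding)\le C(\embedding)\le k^2\,d(\embedding)$ for an arbitrary embedding of a width-$k$ layered graph, and then chain $D_{opt}^{c}\le C_{opt}^{c}\le C_{opt}^{d}\le k^2 D_{opt}^{d}$ using the cost-optimality of the output of Algorithm~\ref{algo:layer}. Your left inequality (max of nonnegative terms versus their sum, propagated by induction) and your final chaining are fine and coincide with what the paper does.

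The gap is in the right inequality, which you correctly single out as the crux: the fractional path cover of total weight $k^2$ hitting every edge at least once need not exist, and in fact $C(\embedding)\le k^2 d(\embedding)$ is \emph{false} for general layered graphs of width $k$. Take $k=2$, layers $L_j=\{u_j,v_j\}$ for $j=1,\dots,r-1$ with $L_1$ the sources and $L_r=\{t\}$, and edges $u_j\to u_{j+1}$, $v_j\to v_{j+1}$, $u_j\to v_{j+1}$, plus $u_{r-1}\to t$ and $v_{r-1}\to t$. Map every $v_j$ ($j\ge 2$) to the sink $t$ and every $u_j$ ($j\ge2$) to $s_1$, with $\distance_{s_1,t}=\distance_{s_2,t}=1$ and unit weights $\edgewt$; then the embedded lengths are $1$ on the $r$ edges $u_1\to v_2$, $v_1\to v_2$, $u_j\to v_{j+1}$ ($2\le j\le r-2$), $u_{r-1}\to t$, and $0$ elsewhere. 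Every source-to-sink path uses exactly one unit-length edge, because after any jump into the $v$-chain the only continuation is $v_j\to v_{j+1}\to\cdots\to t$, all of length $0$; hence $d(\embedding)=1$ while $C(\embedding)=r>k^2$ once $r\ge 5$. This simultaneously kills your cover: if a weight-$k^2$ cover existed, your reordering of the double sum would prove the false inequality. The structural reason is that the minimum weight of such a cover is, by LP duality (min-flow with unit lower bounds), the largest number of edges crossing a backward-edge-free cut, and such a cut may collect one edge from \emph{every} transition, i.e.\ $\Omega(r)$ edges; the ``at most $k^2$ edges per transition'' count only controls cuts confined to a single transition. The paper is no more careful here --- it asserts $C\le k^2 d$ after assuming all edges in a transition have equal embedded delay, an assumption under which your uniform cover is feasible (e.g.\ for the complete layered graph) --- and your attempt to remove that assumption is precisely where the argument, and with it the stated $k^2$ guarantee, breaks down.
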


\section{$\compgraph$ is a Bounded Treewidth Graph}
\label{sec:treewidth}

We now extend the application of the algorithm of the preceding
section to a graph that may not be a DAG. Towards that, we use the
notion of the \textit{treewidth} of the graph, which is a measure of
how far the graph is from a tree. The following definition of the
treewidth of a graph is from \cite{Diestel00} and reproduced here for
the sake of completeness.

\begin{definition}
  A tree decomposition of a graph $\compgraph=(\compnodes,\compedges)$
  is a tree $T$ with vertices $V_1,\ldots,V_r$ such that each $V_i
  \subset \compnodes$ and satisfies the following properties:
  \begin{enumerate}
  \item $\cup_i V_i = \compnodes.$
  \item If $u \in V_i$ and $u \in V_j$ then $u \in V_k$ for all $V_k$
    such that $V_i,V_k,V_j$ form a connected component.
  \item For all $(u,v) \in \compedges$ there exists a subset $V_i$
    such that both $u,v \in V_i.$
  \end{enumerate}
  The width of a tree decomposition is the size of largest $V_i$ minus
  one. The treewidth $tw$ of a graph is the minimum width among all
  possible tree decomposition of the graph.
\end{definition}

In the previous section we presented Algorithm~\ref{algo:layer} to
find the minimum cost embedding of a layered graph when the edges are
possible only between the consecutive layers. It is easy to observe
that the treewidth of such a layered graph with maximum width $k$ is
$tw = 2k-1.$ The tree decomposition of the layered graph is shown in
Fig.~\ref{fig:layertree}. A simple reinterpretation of the process of
finding the minimum cost embedding in Algorithm~\ref{algo:layer} gives
us a procedure to find the minimum cost embedding of the graphs with
bounded (constant in terms of the size of the graph) treewidth.

Let us denote the vertices in the tree decomposition of the layered
graph as $V_1,\ldots,V_{r-1},$ where the vertex $V_i$ contains all the
vertices from layer $i$ and $(i+1).$
\begin{figure}[tbp]
  \centering
   \subfloat[]{
\begin{tikzpicture}[>=latex,scale=0.7]
\scriptsize
\draw (5,4) node[draw,rounded corners,align=left] (a) {$\omega_{11},\ldots,\omega_{1w}$\\$\omega_{21},\ldots,\omega_{2w}$};
\draw (5,2) node[draw,rounded corners,align=left] (b) {$\omega_{21},\ldots,\omega_{2w}$\\$\omega_{31},\ldots,\omega_{3w}$};
\draw (5,-1) node[draw,rounded corners,align=left] (c) {$\omega_{(r-1)1},\ldots,\omega_{(r-1)w}$\\$\omega_{r1},\ldots,\omega_{rw}$};
\draw [-] (a) --(b);
\draw [-] (b)  -- (5,1);
\draw [-] (c) -- (5,0);
\fill (5,1) circle (0.07cm);
\fill (5,0.5) circle (0.07cm);
\fill (5,0) circle (0.07cm);
\end{tikzpicture}
}
  \caption{Tree decomposition of the layered graph with $r$ layers of
    width $k$ each}
  \label{fig:layertree}
\end{figure}
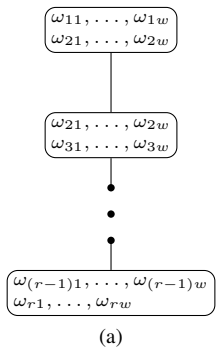
Observe that in the $i^{th}$ run of the loop written in lines
$9$--$11,$ Algorithm~\ref{algo:layer} computes the cost of embedding
all the edges and nodes present in the vertex $V_i$ of the tree
decomposition for all possible mappings of nodes in $V_i.$ In lines
$12$--$13$ the algorithm finds the optimal embedding cost of all the
nodes and edges till $V_i$ conditioned on the mapping of vertices in
$V_i \cap V_{i+1}.$ Lines $16$--$18$ compute the cost till the last
layer and then the algorithm traces back from $V_{r-1}$ to $V_1$ to
get the final optimal cost and the corresponding embedding (lines
$19$-$23$). Note that the time required to compute the cost till
vertex $V_i$ depends on the size of this vertex (lines $8,9$ define
that value). And the total time to complete the process can be written
as $O(rn^{tw+1})$ where $tw$ is the size of largest $V_i$ in a tree
decomposition.

Therefore if we can find the tree decomposition of any computation
graph $\compgraph$ of $r$ vertices with size of largest vertex to be
$tw$ then an algorithm similar to Algorithm~\ref{algo:layer} can be
used to compute the minimum cost embedding in time $O(rn^{tw+1}).$

\begin{figure}[tbp]
 \centering
\subfloat[]{
\begin{tikzpicture}[>=latex,scale=0.7]
 \scriptsize
 \tikzstyle{every node} = [circle,draw=black]
 \node (s) at (0,3) {$S$};
 \node (a) at (0,2) {$a$};
 \node (b) at (0,1) {$b$};
 \node (c) at (-1,0) {$c$};
 \node (d) at (1,0) {$d$};
 \node (e) at (0,-1) {$e$};
 \node (t) at (0,-2) {$t$};
 \draw [->] (s) -- (a);
 \draw [->] (a) --(b);
 \draw [->] (b) --(c);
 \draw [->] (b) --(d);
 \draw [->] (c) --(e);
 \draw [->] (d) --(e);
 \draw [->] (e) --(t);   
 \draw [->] (e) to[out=0,in=-30,looseness=2] (a);
\end{tikzpicture}
}
\hspace*{10pt}
\subfloat[]{
\begin{tikzpicture}[>=latex,scale=0.7]
  \scriptsize
  \tikzstyle{every node} = [circle,draw=black]
  \node (a) at (0,3) {$s,a$};
  \node (b) at (0,1) {$a,b,e$};
  \node (c) at (2,-1) {$b,d,e$};
  \node (d) at (1,-3) {$e,t$};
  \node (e) at (-2,-1) {$b,c,e$};
  \draw [-] (a) -- (b) -- (c) -- (d);
  \draw [-] (b) -- (e);
\end{tikzpicture}
} 
 \caption{(a). A simple computation graph with directed
    cycle. (b). Tree decomposition of the graph}
  \label{fig:series}
\end{figure}
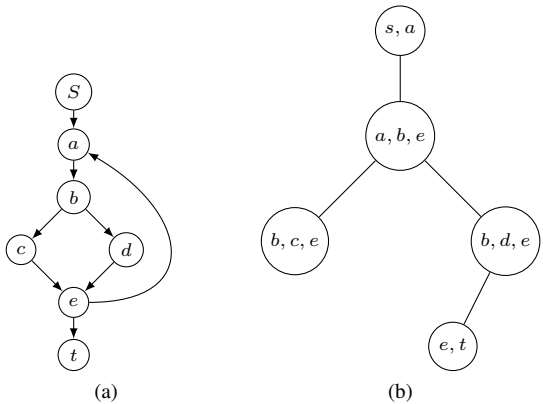

An example of a non DAG graph for the technique discussed above is
shown in Fig.~\ref{fig:series}a. It is a computation
schema that has a conditional jump, represented by the link between
vertexes $e$ and $a$. This is an example of a series-parallel graph
\cite{Diestel00} and such graphs are amenable to a tree decomposition
like in Fig.~\ref{fig:series}b.  In the preceding we have shown that
the technique used in Algorithm~\ref{algo:layer} can also be used to
find the minimum cost embedding of a series-parallel computation
graph, among others, that can have a bounded tree decomposition.

\section{Updating Solution to \mincost\ for Perturbations of $\compgraph$}
\label{sec:change_computation}

Let us consider a situation where the minimum cost embedding for a
layered graph $\compgraph$ is given and one needs to find the
embedding for a new graph $\compgraph^{\prime}$ which is generated by
adding vertices and/or edges in $\compgraph.$ We assume that
$\compgraph^{\prime}$ is still a layered graph with $r$ layers and
maximum width $k.$ Assume that we are given a set of $r$ tuples
$(e_i,l_i)$ where edge $e_i$ is added at layer $l_i.$ Note that edge
$e_i$ should have at least one end point at the existing vertex in
graph $\compgraph.$ To find the new embedding we first sort the $r$
tuples in $[(e_1,l_1),\ldots,(e_r,l_r)]$ such that $l_1 \leq l_2
\ldots \leq l_r.$ Now we start adding the edges layer wise from $l_1$
to $l_r.$ At any layer $l$ the following three types of additions are
possible.

\begin{enumerate}
\item \textbf{Addition of a vertex with only one edge.} When a vertex,
  say $v,$ is added with an edge $uv$ to an existing vertex $u$ at
  layer $l,$ then $v$ can be seen as a sink to an intermediate
  function value available at $u.$ Let us assume that the vertex $u$
  is mapped to vertex $z_u \in \netnodes$ under the original
  embedding. If the mapping $z_v \in \netnodes$ of $v$ is predefined
  (which is generally the case for all the sources and sinks in the
  network) then we just have to find the minimum cost path between
  $z_u$ and $z_v$ and add it to the existing embedding to find the new
  embedding. If the mapping of $v$ is not predefined then mapping it
  to $z_u$ will give the minimum cost embedding. This can be done in
  $O(1)$ time.

\item \textbf{Addition of an edge between two existing vertices.} Let
  us consider a situation when an edge $uv$ is added such that $u$ is
  at layer $l$ and $v$ is at layer $(l+1)$ with weight $\edgewt(u,v).$
  The data structure already available for each layer $l$ after
  running Algorithm~\ref{algo:layer} is shown in
  Fig.~\ref{fig:algo}. Recall that $f_l(X_i,Y_j)$ is the cost of
  embedding till layer $l$ (including the edges between layer $l$ and
  $(l+1)$) when vertices of layer $l$ are at $X_i$ and that of $(l+1)$
  are at $Y_j.$ And $h_l(Y_j)$ is the optimal cost till layer $l$ when
  vertices at layer $(l+1)$ are at $Y_j.$ If the vertex $u$ is placed
  at $x_u \in X_i$ and $v$ is placed at $y_v \in Y_j$ then the new
  cost of embedding is $f_l^{\prime}(X_i,Y_j) = f_l(X_i,Y_j) +
  \edgewt(u,v)\distance_{x_uy_v}$ (line number $10$ of
  Algorithm~\ref{algo:layer}). One needs to modify the whole $f_l$
  matrix at layer $l$ by adding a value and correspondingly the
  minimum cost $h_l(Y_j)$ will also change (line number $12,13$). As
  the value of $h_l(X_i)$ changes the pointers to compute $f,$ the
  values for all the subsequent layers will also change. Hence at each
  layer $i$ we will modify the whole data structure just by adding new
  values of $h_{i-1}$ and subtracting the old values of $h_{i-1}.$
  Once all the data structures are changed the algorithm needs to run
  the same back tracking procedure (lines $19$--$23)$ to get the new
  embedding. Assuming that the modification in the data structure at
  each layer can be done in $O(T)$ time then the new embedding can be
  found in $O(rT)$ time.

\item \textbf{Addition of a vertex with more than one edge.} Let a
  vertex $v$ is added to layer $l$ with more than one edges to
  existing vertices at layer $(l-1)$ and/or $(l+1).$ The width of the
  layer $l$ is still upper bounded by $k.$ By following the same logic
  presented above it is easy to observe that now the data structures
  from layer $(l-1)$ onward will change, i.e., $f_{l-1}(X_i,Y_j)$ and
  $h_{l-1}(Y_j)$ will also change. And the new embedding can again be
  found in $O(rT)$ time.
\end{enumerate}

Note that at a layer $l$ the data structure changes due to edges added
to layers $1,\ldots,(l-1)$ (which takes $O(l-1)$ time) and the edge
added at layer $l$ (which takes only $O(T)$ time as described
earlier). Hence the total time to change all the data structures in
this process will be just $O(rT),$ where $r$ is the total number of
layers, as opposed to $O(r^2T)$ if we add each edge separately.
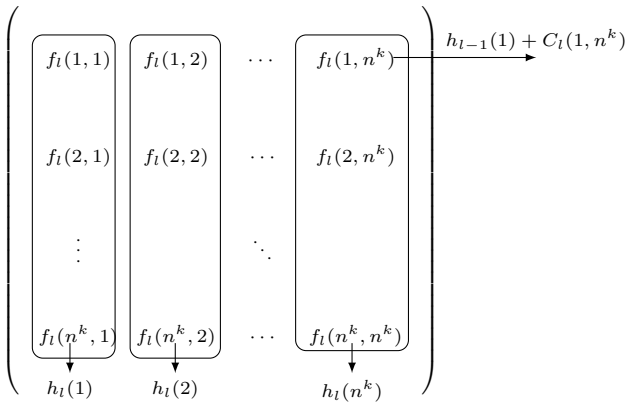
\begin{figure}[tbp]
  \centering
\begin{tikzpicture}[>=latex]
\scriptsize
 \matrix (A) [matrix of math nodes,%
             nodes = {node style ge},%
             left delimiter  = (,%
             right delimiter = )] at (0,0)
             {
 f_l(1,1) & f_l(1,2) & \cdots & f_l(1,n^k) \\
 f_l(2,1) & f_l(2,2) & \cdots & f_l(2,n^k) \\
 \vdots & & \ddots \\
 f_l(n^k,1) & f_l(n^k,2) & \cdots & f_l(n^k,n^k) \\
};
 \draw [->] (2.3,1.9) -- (4.2,1.9);
 \node at (4.2,1.9) [draw=none,above] {$h_{l-1}(1) +C_l(1,n^k)$};

 \draw [rounded corners] (1,-2) rectangle (2.5,2.2);
 \draw [->] (1.75,-1.9) -- (1.75,-2.3);
 \node at (1.75,-2.3) [draw=none,below] {$h_l(n^k)$};
 
 \draw [rounded corners] (-2.5,-2.1) rectangle (-1.4,2.2);
 \draw [->] (-2,-1.9) -- (-2,-2.3);
 \node at (-2,-2.3) [draw=none,below] {$h_l(1)$};  
 
  \draw [rounded corners] (-1.2,-2.1) rectangle (0,2.2);
  \draw [->] (-0.6,-1.9) -- (-0.6,-2.3);
 \node at (-0.6,-2.3) [draw=none,below] {$h_l(2)$};
\end{tikzpicture}
 \caption{Data structure in Algorithm~\ref{algo:layer} at layer $l$}
  \label{fig:algo}
\end{figure}

\section{Discussion}
\label{sec:discuss}

\subsection{Distributed versions of the algorithms.}  

Note that Algorithms~\ref{algo:tree} and \ref{algo:layer} are both
centralized algorithms. In other words they both need knowledge of
$\net.$ The algorithms have two stages with the all-pair distance
matrix $\mathtt{D}$ of $\net$ computed in the first stage and the
optimal value of $h_l$ and $f_l$ using $\mathtt{D}$ computed in the
second stage; see lines $6$ and $14$ of algorithm~\ref{algo:tree} and
lines $10$ and $17$ of Algorithm~\ref{algo:layer}.  Observe that once
$\mathtt{D}$ is known, the outputs of our algorithms are independent
of the vertex of the communication network which runs the
algorithm. Hence all vertices in $\net$ can run the algorithms to
obtain the optimal embedding without interacting with the other
vertices in the network. Hence if one can compute the distance matrix
in a distributed way in the network, then the optimal embedding can
also be found in a distributed manner. Several distributed algorithms
to find $\mathtt{D}$ are available, e.g., \cite{Kanchi04}.

\subsection{Delay in the network with bounded capacity.}

In Section~\ref{sec:prelims} we discussed \mindelay\ which finds the
minimum delay embedding of $\compgraph$ on $\net.$ The delay
calculation in \eqref{eq:nodedelay} (Section~\ref{sec:prelims}, Page
$3)$ assumes that each edge $l= (u_i,u_j) \in \netedges$ has infinite
capacity and can transmit as much data as needed in time $T(l),$ where
$T(l)$ is the delay of an edge $l.$ In general the edges in the
network have finite capacity and can transmit only one type of data in
time $T(l).$ Here we describe the delay in the network in the general
setting. Recall that an edge $\gamma \in \compedges$ is mapped to a
path in $\net$ in embedding $\embedding$ (defined in
Definition~\ref{def:embedding}). Delay of a network edge $l$ is the
time required for data to go from $u_i$ to $u_j.$ We say that an edge
$\gamma \in \compgraph$ has arrived at link $l$ when the data
corresponding to $\gamma$ is ready for transmission on $l$ at vertex
$u_i.$ Similarly, we say that the edge $\gamma$ has departed from link
$l$ when the data reaches $u_j$ via link $l.$ Let there be $k$ edges
$\gamma_1,\ldots,\gamma_k$ of $\compgraph$ mapped to an edge $l \in
\netedges$ under embedding $\embedding.$ Let the arrival time of these
edges at the link $l$ be $a_{\gamma_1}^l \leq a_{\gamma_2}^l \ldots
\leq a_{\gamma_k}^l.$ We assume that the capacity of each link is such
that at a given time only one kind of data can be transmitted over it
in the network. Then the departure time of these edges from the link
will be $z_{\gamma_1}^l < z_{\gamma_2}^l \ldots < z_{\gamma_k}^l.$ The
departure from the link $l$ can be calculated recursively as follows:
$z_{\gamma_i}^l := \max(z_{\gamma_{i-1}}^l,a_i^l) + T(l).$
Let an edge $\gamma =(\omega_i,\omega_j) \in \compedges$ be mapped to
a path $u_1,u_2,\ldots,u_m$ in $\net$ under embedding $\embedding$
such that $\embedding(\omega_i) = u_1$ and $\embedding(\omega_j) =
u_m.$ Then the delay of $\gamma$ in the embedding is the sum of the
delay incurred at each link $u_1u_2,u_2u_3,\ldots,u_{m-1}u_m$ which
can be written as:
\begin{equation}
  d(\embedding(\gamma)) :=
  d(\embedding(\omega_i),\embedding(\omega_j)) =
  z_{\gamma}^{u_{m-1}u_m} -
  a_{\gamma}^{u_1u_2}. \label{eq:actualdelay}
\end{equation} 
Now the delay of a vertex $\omega_j$ in the embedding $\embedding$ can
be defined as \eqref{eq:nodedelay} where
$d(\embedding(\omega_i),\embedding(\omega_j))$ in computed as the
above equation. We explain our point by an example.
\begin{example}
  \label{ex:delay}
  Consider a computation graph and a communication graph shown in
  Fig.~\ref{fig:delay}. We consider that there are processing delays
  and delay associated with each edge of the communication graph is
  $1.$ Consider an embedding $\embedding$ such that $\embedding(a) =
  k,\embedding(b) = l,\embedding(c) =m.$ Similarly, $\embedding(s_1a)
  = s_1-k,\embedding(s_2a) = s_2-i-j-k,\embedding(s_3b) =
  s_3-i-j-l,\embedding(s_4b) =s_4-l,\embedding(ac) =
  k-m,\embedding(bc)=l-m,\embedding(ct)=m-t.$ It is easy to observe
  that using the delay model described in the
  Section~\ref{sec:prelims} the delay of the embedding is $5.$ While
  in the model described above as the embedding of edges $s_2a$ and
  $s_3b$ have a common edge $i-j$ in them the delay of the embedding
  increases to $6.$
  \begin{figure}[tbp]
    \centering
     \subfloat[]{
\begin{tikzpicture}[>=latex,scale=0.7]
 \scriptsize
 \tikzstyle{every node} = [circle,draw=black]
 \node (a) at (-2,0) [fill=blue!20] {};
 \node (b) at (-1,0) [fill=blue!20] {};
 \node (h) at (1,0) [fill=blue!20] {};
 \node (c) at (2,0) [fill=blue!20] {};
 \node (d) at (-1,-1)[font=\tiny]  {$+$};
 \node (e) at (1,-1) [font=\tiny]{$+$};
 \node (f) at (0,-2)[font=\tiny] {$\times$};
 \node (g) at (0,-3) [fill=red!20]{};
 \node at (-2,0.4) [draw=none]{$s_1$};
 \node at (-1,0.4) [draw=none]{$s_2$};
 \node at (1,0.4) [draw=none]{$s_3$};
 \node at (2,0.4) [draw=none]{$s_4$};
 \node at (0,-3.4) [draw=none]{$t$};
 \node at (d) [draw=none,right=4pt] {$a$};
 \node at (e) [draw=none,right=4pt] {$b$};
 \node at (f) [draw=none,right=4pt] {$c$};
 \draw [->] (a) -- (d) ;
 \draw [->] (b) -- (d) ;
 \draw [->] (h) -- (e) ;
 \draw [->] (c) -- (e) ;
 \draw [->] (d) -- (f) ;
 \draw [->] (e) -- (f) ;
 \draw [->] (f) -- (g) ;
\end{tikzpicture}
}
\hspace*{10pt}
\subfloat[]{
\begin{tikzpicture}[>=latex,scale=0.7]
  \scriptsize
  \tikzstyle{every node} = [circle,draw=black]
  \node (a) at (0,2) [fill=blue!20] {};
  \node (b) at (0,1) {};
  \node (c) at (1,1) [fill=blue!20] {};
  \node (d) at (0,0) {};
  \node (e) at (-1,-1) {};
  \node (f) at (1,-1) {};
  \node (g) at (-1.5,0) [fill=blue!20] {};
  \node (h) at (1.5,0) [fill=blue!20] {};
  \node (i) at (0,-2) {};
  \node (j) at (0,-2.5) [fill=red!20] {};  
 \draw [-] (a) -- (b) -- (d) -- (e) -- (i) -- (j);
 \draw [-] (c) -- (b);
 \draw [-] (g) -- (e);
 \draw [-] (d) -- (f) -- (i);
 \draw [-] (h) -- (f); 
  \node at (a) [draw=none,above] {$s_2$};
  \node at (c) [draw=none,above] {$s_3$};
  \node at (g) [draw=none,above] {$s_1$};
  \node at (h) [draw=none,above] {$s_4$};
  \node at (j) [draw=none,below] {$t$};
  \node at (b) [draw=none,left] {$i$};
  \node at (d) [draw=none,left] {$j$};
  \node at (e) [draw=none,left] {$k$};
  \node at (f) [draw=none,left] {$l$};
  \node at (i) [draw=none,left] {$m$};
\end{tikzpicture}
} 
  \caption{(a) Computation graph for function $r(t) =
      (x_1+x_2)(x_3+x_4)$.  (b) Communication Network where every
      edge has delay $1$}
    \label{fig:delay}
  \end{figure}
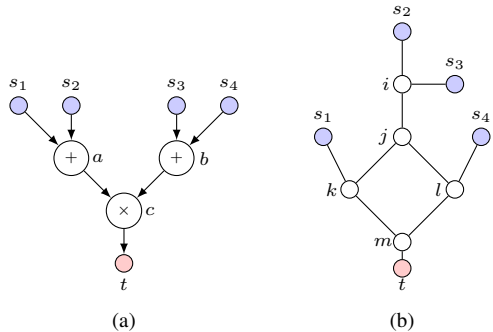
\end{example}

As mentioned in Example~\ref{ex:delay} the actual delay is more than
the delay defined by \eqref{eq:nodedelay} when multiple edges of
$\compgraph$ are mapped to one edge of $\net$ in an embedding. We
study the impact of our assumption via simulations.

We studied the behavior of minimum delay embedding and find the
statistics on maximum number of times an edge of $\net$ is used in the
minimum delay embedding of a typical $\compgraph.$ In our study the
computation graph was taken to be a binary tree of $p=32$ vertices and
its minimum delay embedding on a random graph of $n=120$ vertices was
calculated using Algorithm~\ref{algo:tree}. The probability of an edge
being present in the random graph ($p_r)$ was varied from $0.01$ to
$0.9.$ Note than as $p_r$ increases the number of edges in the network
increases. The communication and transmission costs were assumed to be
one and the processing cost was chosen uniformly from integers
$[1,10].$ For each value of $p_r,$ $32$ instances of network were
generated and for each instance Algorithm~\ref{algo:tree} was run for
$10$ random initial placements of sources and sink. The mean and the
median of the maximum number of times an edge in the network graph is
used in the embedding for each $p_r$ is shown in
Fig.~\ref{fig:meantree}. Observe that as the number of edges are
increased in the network maximum number of times an edge is used
converges to one. Hence we can say that for the networks with large
number of edges compared to that of the computation graph our
assumption of delay calculated by \eqref{eq:nodedelay} will be same as
that of delay computed by \eqref{eq:actualdelay}.

\begin{figure}[tbp]
  \centering
  \includegraphics[width=\linewidth]{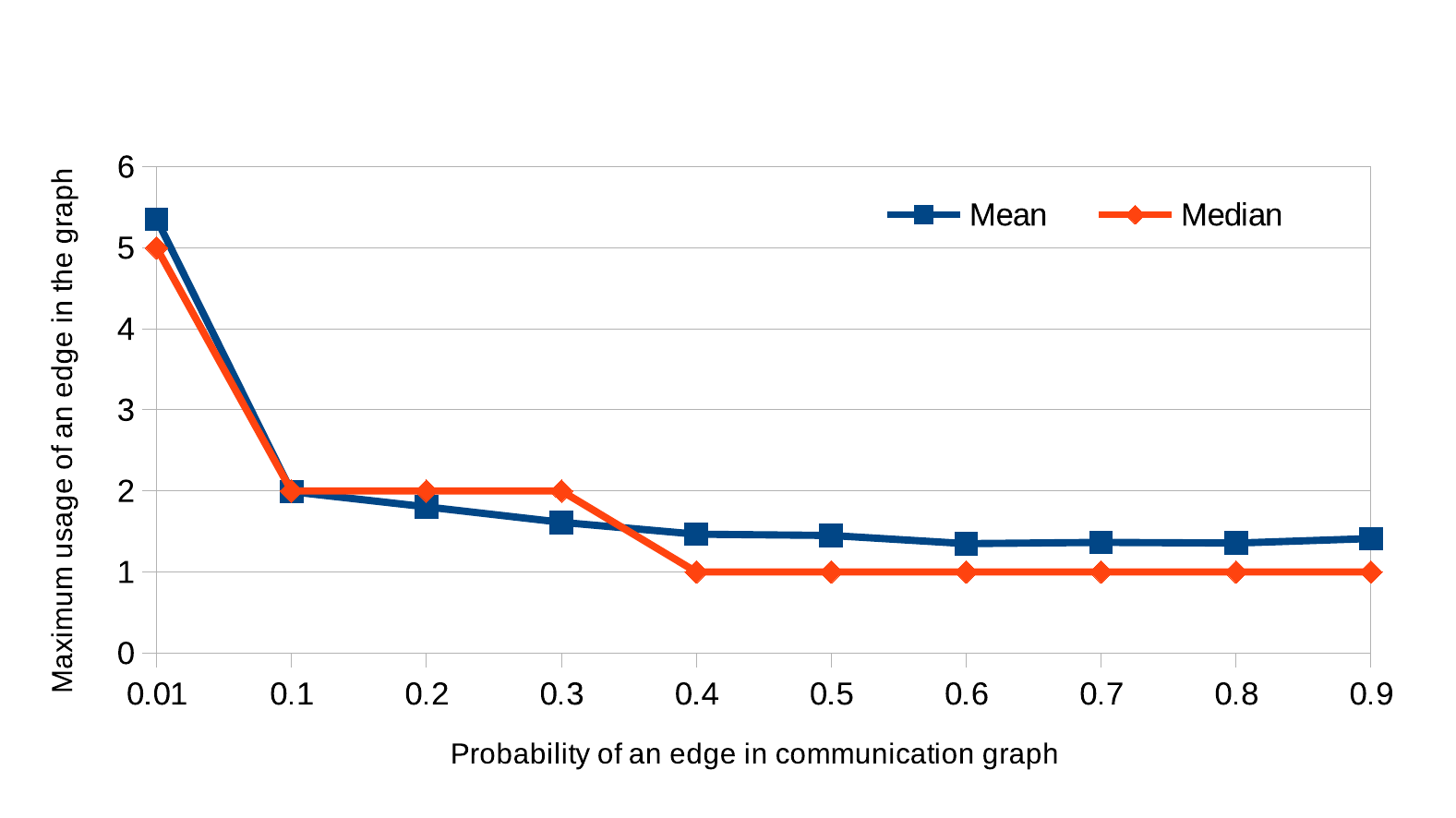}
  \caption{Statistics for maximum number of times a link is used in
    minimum delay embedding}
  \label{fig:meantree}
\end{figure}

\bibliographystyle{IEEE}
\bibliography{function-computation}

\appendices

\section{Hardness PCS-FM}
\label{sec:PCS-FM}

Here we prove that PCS-FM problem is NP-complete by reducing it to
another NP-complete problem \textit{Precedence Constraint Scheduling
  (PCS)} \cite{Garey79}.

PCS problem is defined as follows. Given a set $H^{\prime}$ of tasks
with a $\lessdot^{\prime}$ partial order on it each having length
$l(h) =1\ \ \forall h \in H^{\prime}$ and $m^{\prime} \in \mathbb{Z}^{+}$
processors then find a schedule $\sigma$ of tasks on processors which
meets an overall deadline $D^{\prime}$ and obeys the precedence
constraints, i.e., if for some $h_i,h_j \in H^{\prime}$ $h_i \lessdot
h_j$ then $\sigma(h_j) \geq \sigma(h_i) + 1.$

First we define an instance of PCS-FM problem $\phi =
(H,\lessdot,\tau,p(\tau),m,l,D)$ from an instance of PCS problem $\psi
= (H^{\prime},\lessdot^{\prime},m^{\prime},l^{\prime},D^{\prime}).$ We
create partial order graph $\lessdot$ from $\lessdot^{\prime}$ as
shown in Fig.~\ref{fig:scheduling}. The graph $\lessdot$ has two
parts: One part is the same as $\lessdot^{\prime}$ and the other part
has $k < m$ new vertices $s_1,\ldots,s_k$ giving $H = H^{\prime} \cup
\{s_1,\ldots,s_k\}.$ The vertices $s_1,\ldots,s_{k-1}$ are connected
to $s_k$ by a directed edge and $s_k$ is connected to all the vertices
of $\lessdot.$ Note that as all the edges are going away from $s_i$
the new graph is still a DAG. Let $m= m^{\prime},$ and $\tau =
\{s_1,\ldots,s_k\}.$ Define $l(s_i) = 1 \ \ \forall i \in [1,k]$ and
$l(h) = l^{\prime}(h)$ for all $h \in H^{\prime}.$ Let us number the
processors from $1$ to $m$ as $q_1, \ldots,q_m.$ Let us define $p(s_i)
:= q_i$ and the deadline for $\phi$ is $D = D^{\prime} + 2.$

To prove our claim we have to show that there exists a schedule
$\sigma^{\prime}$ for $\psi$ which meets the deadline $D^{\prime}$ if
and only if there exists a schedule $\sigma$ for $\phi$ which meets
the deadline $D.$ Now observe that in any schedule $\sigma$ for $\phi$
any task in $H^{\prime}$ cannot start unless task $s_k$ is finished
which in turn cannot start unless all the tasks $s_1,\ldots,s_{k-1}$
are finished. As it is given that the tasks $s_1,\ldots,s_{k-1}$ go to
separate processors (due to $p(s_i))$ they all can be finished in $1$
time step giving $\sigma(s_k) \geq \sigma(s_i)+1 = 1.$ Similarly
$\sigma(h) \geq \sigma(s_k) +1 = 2$ for all $h \in H^{\prime}.$ Hence
if there is a schedule $\sigma^{\prime}$ for $\psi$ which starts at
$0$ and finishes before $D$ then a schedule $\sigma$ for $\phi$ can be
defined as $\sigma(h) = \sigma^{\prime}(h) +2$ for all $h \in H,$
$\sigma(s_i) = 0$ for all $i \in [1,k-1]$ and $\sigma(s_k) = 1.$ It is
easy to observe that this is a valid possible schedule and finishes
before $D^{\prime}+2.$

\begin{figure}[tbp]
  \centering
\begin{tikzpicture}[>=latex]
\draw (0.5,1) rectangle (2,2);
\footnotesize;
\node at (1.2,1.7) {$\lessdot^{\prime}$};
\fill (1.5,0.5) circle (0.03cm);
\node at (1.9,0.5) {$s_k$};
\draw [->] (1.5,0.5) -- (0.9,1.2) [fill] circle (0.03cm);
\draw [->] (1.5,0.5) -- (1.6,1.8) [fill] circle (0.03cm);
\draw [->] (1.5,0.5) -- (1.9,1.2) [fill] circle (0.03cm);
\draw [->] (1.5,0.5) -- (1,0) [fill] circle (0.03cm);
\draw [->]  (1.5,0.5) -- (1.5,0) [fill] circle (0.03cm);
\draw [->] (1.5,0.5) -- (2,0) [fill] circle (0.03cm);
\node at (1,-0.3) {$s_1$};
\node at (1.5,-0.3) {$s_i$};
\node at (2,-0.3) {$s_{k-1}$};
\draw (0,-1) rectangle (2.5,2.5);
\node at (2.7,2.4) {$\lessdot$};
\end{tikzpicture}
 \caption{Transformation of PCS into PCS-FM problem}
  \label{fig:scheduling}
\end{figure}
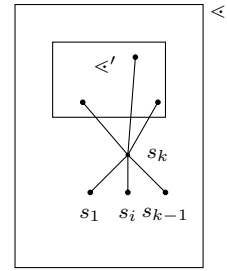

To complete the proof we have to prove that if there is no schedule
for $\psi$ which finishes before $D^{\prime}$ then there is no
schedule for $\phi$ which finishes before $D =D^{\prime}+2.$ We prove
this by contradiction. Let us assume that there is a schedule $\sigma$
for $\phi$ which finishes before $D$ but there is no schedule for
$\psi$ which finishes before $D^{\prime}.$ As noted earlier in any
schedule for $\phi$ first two time slots are required to finish tasks
$s_1,\ldots,s_k$ and then only any other task can start. So in the
schedule $ 2 \leq \sigma(h) \leq D$ for all $h \in H^{\prime}.$ Total
time taken to finish the tasks of $H^{\prime}$ is $ \leq D-2 =
D^{\prime} +2 -2 = D^{\prime}.$ It means that there is a mapping of
tasks of $H^{\prime}$ on $m$ processors such that the total time to
finish is less than $D^{\prime}$ which implies that there is a
schedule for $\psi$ which finishes before the deadline $D^{\prime}.$
This is a contradiction to our assumption. Hence it is proved that the
problem PCS-FM is as hard as PCS.

\end{document}